\documentclass[%
 reprint,
 superscriptaddress,
 amsmath,amssymb,
 aps,
floatfix,
]{revtex4-2}

\usepackage[utf8]{inputenc}
\usepackage[T1]{fontenc}
\usepackage{graphicx}
\usepackage[svgnames]{xcolor}
\usepackage[version=4]{mhchem}
\usepackage{physics}
\usepackage{bbold}
\usepackage{quantikz}
\usepackage{caption}
\usepackage{microtype,mathrsfs,amsthm,amsfonts,latexsym,mathtools,graphicx,enumitem,booktabs,bm,xspace,float,nicefrac,mathdots,subcaption,ellipsis,overpic,enumitem,mleftright,framed,stmaryrd,overpic,placeins,color,soul}
\usepackage[normalem]{ulem}

\usepackage[colorlinks, pagebackref=true]{hyperref}
\usepackage[capitalize,nameinlink]{cleveref}

\newtheorem{thm}{Theorem}[section]\crefname{thm}{Theorem}{Theorems}
\newtheorem*{thm*}{Theorem}
\newtheorem{lem}[thm]{Lemma}\crefname{lem}{Lemma}{Lemmas}
\crefname{prop}{Proposition}{Propositions}
\crefname{cor}{Corollary}{Corollaries}
\theoremstyle{definition}
\newtheorem{dfn}[thm]{Definition}\crefname{def}{Definition}{Definitions}
\theoremstyle{remark}

\hypersetup{
  linkcolor=[rgb]{0.3,0.3,0.6},
  citecolor=[rgb]{0.2, 0.6, 0.2},
  urlcolor=[rgb]{0.6, 0.2, 0.2}
}

\newcommand{\eps}{\varepsilon}
\newcommand{\CC}{\mathbb{C}}
\newcommand{\RR}{\mathbb{R}}
\newcommand{\NN}{\mathbb{N}}
\newcommand{\ot}{\otimes}

\newcommand{\bigO}{\mathcal O}
\newcommand{\gap}{\omega}
\newcommand{\cH}{\mathcal H}
\newcommand{\act}{A}
\newcommand{\env}{C}
\newcommand{\bath}{B}

\newcommand{\covmat}{\gamma}
\newcommand{\precision}{\eps}
\newcommand{\overlap}{\eta}

\DeclareMathOperator{\free}{free}
\DeclareMathOperator{\imp}{imp}
\DeclareMathOperator{\poly}{poly}
\DeclareMathOperator{\prob}{P}

\begin{document}

\date{January 14, 2025}

\title{High ground state overlap via quantum embedding methods}

\author{Mihael Erakovic}
\email{mihael.erakovic@phys.chem.ethz.ch}
\affiliation{ETH Z\"{u}rich, Department of Chemistry and Applied Biosciences, Vladimir-Prelog-Weg 2, 8093 Z\"{u}rich, Switzerland}
\author{Freek Witteveen}
\email{fw@math.ku.dk}
\affiliation{Department of Mathematical Sciences, University of Copenhagen, Universitetsparken 5, 2100 Copenhagen, Denmark}
\author{Dylan Harley}
\affiliation{Department of Mathematical Sciences, University of Copenhagen, Universitetsparken 5, 2100 Copenhagen, Denmark}
\author{Jakob G\"unther}
\affiliation{Department of Mathematical Sciences, University of Copenhagen, Universitetsparken 5, 2100 Copenhagen, Denmark}
\author{Moritz Bensberg}
\affiliation{ETH Z\"{u}rich, Department of Chemistry and Applied Biosciences, Vladimir-Prelog-Weg 2, 8093 Z\"{u}rich, Switzerland}
\author{Oinam Romesh Meitei}
\affiliation{Department of Chemistry, Massachusetts Institute of Technology, Cambridge, Massachusetts 02139, United States}
\author{Minsik Cho}
\affiliation{Department of Chemistry, Massachusetts Institute of Technology, Cambridge, Massachusetts 02139, United States}
\author{Troy Van Voorhis}
\affiliation{Department of Chemistry, Massachusetts Institute of Technology, Cambridge, Massachusetts 02139, United States}
\author{Markus Reiher}
\email{mreiher@ethz.ch}
\affiliation{ETH Z\"{u}rich, Department of Chemistry and Applied Biosciences, Vladimir-Prelog-Weg 2, 8093 Z\"{u}rich, Switzerland}
\author{Matthias Christandl}
\email{christandl@math.ku.dk}
\affiliation{Department of Mathematical Sciences, University of Copenhagen, Universitetsparken 5, 2100 Copenhagen, Denmark}

\begin{abstract}
  Quantum computers can accurately compute ground state energies using phase estimation, but this requires a guiding state that has significant overlap with the true ground state.
  For large molecules and extended materials, it becomes difficult to find guiding states with good ground state overlap
  for growing molecule sizes.
  Additionally, the required number of qubits and quantum gates may become prohibitively large.
  One approach for dealing with these challenges is to use a quantum embedding method, which allows a reduction to one or multiple smaller quantum cores embedded in a larger quantum region.
  In such situations it is unclear how the embedding method affects the hardness of constructing good guiding states. In this work, we therefore investigate the preparation of guiding states in the context of quantum embedding methods.
  We extend previous work on quantum impurity problems, a framework in which we can rigorously analyze the embedding of a subset of orbitals.
  While there exist results for optimal active orbital space selection in terms of energy minimization, we rigorously demonstrate how the same principles can be used to define selected orbital spaces for state preparation in terms of the overlap with the ground state. Moreover,
  we perform numerical studies of molecular systems relevant to biochemistry, one field in which quantum embedding methods are required due to the large size of biomacromolecules such as proteins and nucleic acids.
  We investigate two different embedding strategies which can exhibit qualitatively different orbital entanglement.
  In all cases we demonstrate that the easy-to-obtain mean-field state will have a sufficiently high overlap with the target state to perform quantum phase estimation.
\end{abstract}

\maketitle

\section{Introduction}

Some of the most promising applications for quantum computers arguably lie in their utility for molecular and materials science.
These fields face important computational challenges, in both academic and industrial settings.
Prominent examples can be found in catalysis, battery and drug design \cite{reiher2017elucidating,von2021quantum,kim2022fault,santagati2024drug}, and biochemistry \cite{goings2022reliably,baiardi2023quantum}.
True quantum advantage will likely only emerge in the fault-tolerant regime, where quantum phase estimation (QPE) algorithms will allow for energy calculations of a quantum system (i.e., a molecule or material) with controlled and guaranteed accuracy.

One of the central problems of quantum chemistry is the computation of ground-state electronic energies.
We consider a system with a fixed number $n$ of electrons, and a discretization of the electronic structure problem in the Born-Oppenheimer approximation into $N$ spin-orbitals, giving a Hamiltonian of the form
\begin{align*}
  H = \sum_{pq = 1}^N  h_{pq} a_{p}^{\dag} a_{q}
  + \frac{1}{2}\sum_{pqrs=1}^N g_{pqrs} a_{p}^{\dag} a_{r}^{\dag} a_{s} a_{q} \, ,
\end{align*}
where $a_p$ for $p = 1,\dots,N$ are the fermionic annihilation operators for the $N$ spin-orbitals and $g_{pqrs} = (pq | rs)$ are the two-body integrals, see for example Ref.~\cite{helgakerMolecularElectronicStructureTheory2014} for details.
A system with $N$ spin-orbitals and $n$ electrons has a Hilbert space of dimension ${N \choose n}$ which is exponential in $N$ if $n$ scales linearly with $N$.
High accuracy classical methods struggle with the size of this Hilbert space.
The key advantage of quantum computing is that the quantum computer can natively represent states in the Hilbert space using only $N$ qubits.
When considering the scaling with system size, one should keep in mind that at fixed electron number $n$, but an increasing number of orbitals $N$ (the continuum limit), the Hilbert space dimension increases only as $\poly(N)$ (polynomially in $N$), albeit with an in practice prohibitive exponent $n$.

There are two major challenges in quantum computing for chemistry and many-body physics.
\begin{enumerate}
  \item The first challenge is the \emph{orthogonality catastrophe}. Quantum phase estimation requires an initial \emph{guiding state} which has sufficient
        overlap with the ground state. However, for systems with large $N$ and scaling electron number, small local errors in the guiding state lead to an exponential decay in the overlap with the global ground state. As a result good guiding states become hard to find, see \cref{sec:orthogonality}.
  \item The second challenge is to perform Hamiltonian simulation for a time $\bigO(\eps^{-1})$ to achieve quantum phase estimation precision $\eps$, and to reduce the dependence of the computational cost on $N$.
\end{enumerate}

The first challenge is a fundamental and \emph{theoretical} obstruction to computing ground state energies efficiently on a quantum computer.
However, for many molecules, conventional computational methods are observed to find good guiding states in practice (though this may be different for strongly correlated materials with a lattice structure), so for practical applications to quantum chemistry there is evidence that this is not the main problem \cite{tubmanPostponingOrthogonalityCatastrophe2018,fomichevInitialStatePreparation2023} and our work supports this claim.
For the second challenge, phase estimation based on Trotterization of the time evolution unitary operator scales as $\mathcal{O}(N^4)$ for the full electronic Hamiltonian because of the two-electron interaction terms in its second-quantized form. However, truncation strategies can be employed to reduce the number of relevant parameters and significantly reduce the scaling of the phase estimation. In the case of Trotterization, the scaling can be reduced to $\mathcal{O}(N^3)$ for a given molecule and increasing basis set size and to $\mathcal{O}(N^2\log N)$ for increasing molecule size \cite{motta2021low_rank}. Alternatively, qubitization allows for the factorization of the Hamiltonian, which can reduce the scaling down to $\mathcal{O}(N)$ \cite{von2021quantum}. However, this comes at the cost of a large number of ancillary qubits, and the overall scaling of the phase estimation also depends on a normalization factor $\lambda$ which scales with $N$. Hence, the scaling can lead to quantum circuits with a prohibitively large gate count
in practice, even for different truncation and factorization strategies, and this cost represents a serious bottleneck for useful quantum computations. Specifically, from state-of-the-art resource estimates it appears that, with fault-tolerant devices, computations for 100 to 200 spin orbitals already require very extensive quantum resources \cite{von2021quantum,lee2021even,kim2022fault,blunt2024resource_estimates}, on the order of $10^{10}$ Toffoli gates \cite{berry2019qubitization} (with the caveat that it is difficult to predict the hardware specifications of future fault-tolerant quantum computers).

Given a choice of orbital basis, the electronic ground state can be written as
\begin{align}\label{eq:basis expansion}
  \ket{\Psi} = \sum_{i} C_i \ket{\Phi_i}, \quad \sum_i \abs{C_i}^2 = 1,
\end{align}
where the sum runs over Slater determinants $\ket{\Phi_i}$.
For example, one can choose Hartree-Fock (HF) orbitals.
In that case, the expansion in \cref{eq:basis expansion} consists of the HF state $\ket{\Phi_0}$ and corrections to it, which represent electron correlations.
One can also choose a different orbital basis, and this may lead to a significant increase in the largest weight on a single determinant.
If the HF basis state qualitatively misrepresents the ground state, and there are more determinants with weights of the same order of magnitude, this is known as \emph{static correlation}.
The remaining determinants with small weights then account for the so-called \emph{dynamical correlation}.
While useful notions, there is no sharp distinction between dynamical and static correlation.

The question of finding and preparing good guiding states is closely related to the computational complexity of the electronic structure problem.
The problem of finding ground state energies of local Hamiltonians in full generality is QMA-hard \cite{kempe2006complexity,ogorman2022intractability}, so this task is believed to be intractable even for quantum computers. In particular, this implies that it is hard to prepare guiding states with non-trivial ground state overlap --- however, numerical evidence suggests that finding good guiding states is feasible for many realistic not-too-large chemical systems \cite{tubmanPostponingOrthogonalityCatastrophe2018,leeEvaluatingEvidenceExponential2023}.
This is the case if the correlation is of a dynamical nature, and also in systems with a static correlation; a small number of determinants \cite{fomichevInitialStatePreparation2023}, choosing a different orbital basis (see, for instance, Ref.~\cite{ollitrault2024enhancing}), or symmetry-respecting configuration state functions \cite{marti2024spin} may suffice to reach good ground state overlap and hence yield efficient quantum phase estimation algorithms.
For quantum phase estimation, a constant (or even inverse polynomially small) ground state overlap suffices for efficient ground state energy computation.
For the total cost, it does not make a significant difference whether one has a constant ground state overlap, or even ground state overlap close to 1.
However, if the guiding state has ground state overlap close to 1, this allows the possibility to paralellize the phase estimation into multiple circuits of shorter depth \cite{ding2023even,ni2023low}, as we briefly review in \cref{sec:orthogonality}, and which may be particularly useful for early fault-tolerant quantum computers.

In many large molecules where static quantum correlations are important, so that a mean-field treatment does not suffice to represent the target state even qualitatively well, these correlations can be assigned to a relatively small number of orbitals. If these orbitals are localized in three-dimensional space, static correlation will be of a short-ranged nature.
In such situations, both challenges (ground state overlap and Hamiltonian simulation scaling with $N$) can be addressed by using an appropriate \emph{embedding method}, which singles out a spatial region (or length scale) that is treated fully quantum mechanically, and an environment (or longer range correlations) which can be treated with an approximate electronic structure model such as (mean-field) HF or some other electronic structure model of low computational complexity such as (variants of) Kohn-Sham density functional theory (DFT).
By focusing on a smaller embedded quantum subsystem (denoted as a quantum core in the following), the system to which one applies phase estimation is sufficiently small so that the exponential scaling of the orthogonality catastrophe has not yet kicked in, and one can still find high-overlap guiding states using classical methods.
Additionally, the smaller number of orbitals leads to a reduced gate count in the phase estimation quantum circuit.
For very large molecules (such as proteins, protein complexes, or interacting biomolecules in biochemistry), embedding strategies can be of a multi-level nature, where the description of the interatomic interactions ranges from classical force fields to different quantum mechanical approximations.
We provide a brief introduction to relevant aspects of quantum embedding theory in \cref{sec:embedding}.

In this work we focus on quantum-in-quantum embedding schemes, where a large system is described by an approximate quantum mechanical model (such as HF or Kohn-Sham DFT),
while one or more subsystems embedded into this large system define the quantum cores. The energy contribution of the quantum cores to the full potential energy surface of the large system can be rigorously obtained by a (future) fault-tolerant quantum computer using a representation of the full wavefunction in the restricted orbital space of the quantum cores.
It is for this reason that we consider state preparation in the context of quantum embedding approaches.

While wavefunction-based approaches in traditional computing (such as coupled cluster theory or multi-configurational approaches with multi-reference perturbation theory) can deliver accurate quantum core energies, an important drawback is that precise and controllable error bounds are not known for any of these methods \cite{reiher2022molecule}. By contrast, quantum computation based on phase estimation can deliver total electronic energies to a given precision (that is, typically chemical accuracy between 1 and 0.1 mHartree, 
which is sufficient to ensure that relative energies are sufficiently accurate for the evaluation of valence-shell properties such as relative  energies of molecular structures or rate constants) --- provided that an initial guiding state can be efficiently prepared which will have a large overlap with the target state that is to be determined. We note that the growth of the absolute electronic energy with molecular size is due to the low-lying core shell orbitals, which do not contribute to such valence shell properties (apart from the fact that an embedding that restricts the orbital space would not allow for arbitrary growth of the electronic energy).
We emphasize that rigorous error estimates will be a key advantage over traditional approaches \cite{reiher2017elucidating, Liu2022,von2021quantum},
apart from the fact that a quantum computer with a sufficiently large number of qubits for the representation of an electronic state will tame the curse of dimensionality posed to traditional approaches.

The use of embedding methods for quantum computing in quantum chemistry raises important questions:
\begin{enumerate}
  \item What is the interplay between an embedding method and the \emph{guiding state}? For example, if the choice of the embedding method affects the type of correlation on the resulting orbital space in a quantum core, then this may have consequences for the character of the guiding state up to the point where it might be difficult to determine.
  \item What is the \emph{computational complexity} of problems with localized quantum correlations?
\end{enumerate}

The second question concerns the general \emph{computational complexity} of problems with localized quantum correlations.
In \cref{sec:impurity}, we study this question theoretically in the context of \emph{quantum impurity models}, specifically by extending the work of \cite{bravyi2017complexity} on the nature of \emph{quantum impurity problems} to shed light on these questions.
A quantum impurity model is a system with a scaling number of $N$ orbitals, of which only a constant number $M$ participate in two-body interactions.
Such models may be considered a useful description for systems where electron correlation is localized to a small subset of orbitals.
Based on \cite{bravyi2017complexity}, we show that there exist good quantum embeddings for such systems.
Based on this, we propose an approach to the guiding state problem where one uses an embedding method to find a guiding state that is an arbitrary state on the quantum core, and a Slater determinant on the environment.
We show that if one chooses a good embedding or if the one-body Hamiltonian is gapped, this guiding state has at least an inverse polynomial overlap with the true ground state of the embedding Hamiltonian, giving a polynomial quantum algorithm for the ground state energy.
This is in contrast to the best-known rigorous classical algorithms, which in these cases have a quasi-polynomial scaling in the precision \cite{bravyi2017complexity}.

These results motivate a general strategy for preparing guiding states for problems with localized quantum correlations on large-scale fault-tolerant quantum computers.
We propose that one searches for guiding states through quantum-in-quantum embeddings.

Here, one identifies an active orbital space $A$ and an environment $C$, and takes as a guiding state a state of the form $\ket{\Phi_A} \wedge \ket{\Psi_C}$, where $\ket{\Phi_A}$ is the solution to the (small) active space problem, $\ket{\Psi_C}$ is a Slater determinant on the environment, and $\wedge$ denotes the antisymmetrized product. A mean-field calculation is used to obtain the wavefunction on the environment, with which the embedded Hamiltonian can be constructed. An approximation of the ground state energy and wavefunction can then be calculated using one of the classical approaches. Tensor network-based methods, such as the density matrix renormalization group (DMRG), stand out in particular as they provide an adequate description even in the case of strongly correlated embedded systems. These approaches, however, introduce errors in the energy due to approximations (e.g. limited bond dimension in the case of DMRG), but may provide a good approximation to the wavefunction. Quantum phase estimation then allows for accurate computation of the ground state energy using the clasically calculated approximate wavefunction as a guiding state.

We demonstrate numerically in \cref{sec:numerical} that for two conceptually different embedding methods, one can find low-complexity guiding states, even though the two different methods produce rather different orbital bases. For this demonstration we focus on small biomolecules (tryptophan and its oligopeptide structures), which occur as monomers and oligomers in biomacromolecules (proteins), and on a transition-metal drug that can form a host-guest complex with a protein. Such complexes are key to molecular recognition processes.
While we focus on biomolecules, our approach is similarly relevant in other application areas involving larger molecules or materials.

\section{Quantum embedding methods}\label{sec:embedding}
In order to make computations feasible for large systems, for which physical or chemical properties are governed by relatively small spatial regions within the larger system, an established approach (see, for instance, Refs.~\cite{warshel1976theoretical,wesolowski1993frozen})
is to treat different subsystems or length scales at different levels of accuracy.
For example, in the case of molecular recognition, it may be sufficient to focus on the interaction seam of host and guest molecules and, hence, to take into account electron correlation accurately only in the binding region.
\emph{Embedding theory} deals with approaches to embed
approximations of different levels of accuracy into one another;
see Refs.~\cite{Senn2009,Mennucci2012,Libisch2014,sun2016quantum,Jones2020,Jacob2024} for reviews of quantum embedding theory.
In general, one decomposes the system into a \emph{quantum core}, where one uses a \emph{high-level} electronic structure model, which takes into account the more complex nature of the wavefunction, and a (quantum or classical) environment. If feasible, the environment can be treated using a \emph{low-level} model such as HF or DFT.
It is possible to have multiple quantum cores and multiple embedding layers, treating parts of the system at the level of classical force fields, semi-empirical methods, or DFT at different levels of accuracy.
In order to take the effect of the environment into account, an embedding scheme requires the construction of an effective Hamiltonian on the quantum cores, given the solution of the low-level solver.
Given the result of minimizing this Hamiltonian on the quantum cores, the embedding scheme should provide a prescription for a global energy estimate.

We will discuss two different embedding schemes, which we compare in detail with molecular examples in \cref{sec:numerical}.
The first approach is based on density matrices.
This is natural for quantum computing, since quantum computation naturally interfaces with quantum states.
For example, in the bootstrap embedding method (a particular density matrix-based embedding scheme), one can enforce the consistency conditions between overlapping fragments on the quantum computer \cite{liu2023bootstrap}.
As an alternative, we also consider projection-based embedding methods.
These have the advantage of being naturally compatible with DFT, which may be useful for multiscale approaches.

\subsection{Embedding based on density matrices}
A natural approach toward embedding emerges from the framework of open-system quantum theory and is based on density matrices; an example is \emph{density matrix embedding theory} (DMET) \cite{knizia2012density,wouters2016practical,pham2018can, Sekaran2021},
see \cite{cances2023some} for a precise mathematical description.
Here we consider a system with $N$ orbitals, where we identify subsets $\act_1, \dots, \act_n$ of these orbitals, which may be called `fragments' or `impurities'. We let $M_i$ be the number of orbitals in $\act_i$. 
Depending on the method, these fragments may or may not overlap.
Whereas the formal basis of DMET is the Schmidt decomposition, its
key practical idea is to combine a high-level solver on the fragments with a mean-field solution on the global level and enforce consistency conditions.
We start with a global state given by a single Slater determinant $\ket{\Psi}$, which can be obtained as the Hartree-Fock solution by the low-level solver.
For each of the subsystems $\act_i$ we can perform a Schmidt decomposition, which for each $i$ gives a decomposition of the full $N$-dimensional single-particle space into $\act_i$ (the fragment orbitals), a `bath' $\bath_i$ for $\act_i$ of size at most $M_i$ (the bath orbitals) and its complement $\env_i$, such that the state
\begin{align*}
  \ket{\Psi} = \ket{\Psi_{\act_i \bath_i}} \wedge \ket{\Psi_{\env_i}}
\end{align*}
is decomposed into Slater determinants $\ket{\Psi_{\act_i \bath_i}}$ and $\ket{\Psi_{\env_i}}$ on $\act_i \bath_i$ and $\env_i$ respectively.
Now, one may look for wavefunctions of the form $\ket{\Psi_i} = \ket{\Phi_{\act_i \bath_i}} \wedge \ket{\Psi_{\env_i}}$ where $\ket{\Phi_{\act_i \bath_i}}$ is now an arbitrary fermionic state on $\act_i \bath_i$, determined by a high-precision solver (such as a quantum computer).
The high-level solver minimizes the energy over states of this form.
Since we have a fixed Slater determinant $\ket{\Psi_{\env_i}}$ on $\env_i$, this gives an effective Hamiltonian on $\act_i \bath_i$
\begin{align*}
  H_{\act_i \bath_i} = \left(I \ot \bra{\Psi_{\env_i}}\right) H \left(I \ot \ket{\Psi_{\env_i}}\right).
\end{align*}
This Hamiltonian is such that
\begin{align*}
  \min_{\ket{\Psi_i}} \bra{\Psi_i} H \ket{\Psi_i} & = \min_{\ket{\Phi_{\act_i \bath_i}}} \bra{\Phi_{\act_i \bath_i}} H_{\act_i \bath_i}\ket{\Phi_{\act_i \bath_i}},
\end{align*}
where the minimization on the left-hand side is over states of the form
\begin{align*}
  \ket{\Psi_i} = \ket{\Phi_{\act_i \bath_i}} \wedge \ket{\Psi_{\env_i}}.
\end{align*}

There are different schemes for estimating a global energy based on density matrix embedding.
The standard approach is to try to achieve self-consistency between the global Slater determinant and the fragment wavefunctions.
Here, one takes non-overlapping fragments and searches for a solution where the Slater determinant $\ket{\Psi}$ has the same 1-RDMs on the fragments $\act_i$ as the fragment solutions $\ket{\Phi_{\act_i \bath_i}}$.
This can be obtained by iteratively solving for the fragments and updating them to a global mean-field solution with constrained 1-RDMs.
A different approach is the \emph{bootstrap embedding}, where one takes overlapping fragments and iteratively tries to enforce consistency between the fragments on their overlap instead of with a global state, see \cite{welborn2016bootstrap,ye2019bootstrap,ye2020bootstrap} for details.
However, we note that in both cases, there is no guarantee of the existence of a \emph{global} wavefunction compatible with these schemes, and the results are not variational. This means that the obtained value is not necessarily a strict upper bound on the true ground state energy.

\subsection{Embedding based on projection}
An alternative approach is given by \emph{projection-based} embeddings \cite{hegely2016exact, Manby2012}.
Here, we use a method based on the Huzinaga equation \cite{huzinaga1971theory,hegely2016exact}.
We describe the case with a single quantum core.
We start the theory discussion with HF as the low-level solver, highlighting the similarities to DMET. Later will we describe how DFT retrieves the missing electron correlation in Hartree-Fock, making the approach formally exact.
The exactness of the approach means that when using the exact (unknown) exchange correlation functional for the DFT computation, and the exact wavefunction solution on the embedded fragment, the result of following the embedding procedure equals the exact ground state energy.
This embedding method starts from a HF (or from a Kohn-Sham density functional theory calculation), yielding a collection of molecular orbitals indexed by a set $J$ and a single Slater determinant $\ket{\Phi}$ with the orbitals in $\mathcal N \subseteq J$ occupied.
Given these orbitals $\phi_j$, they are partitioned into two sets $\act$ and $\env$ of active and environment orbitals.
We write $\mathcal N_{\act}$ and $\mathcal N_{\env}$ for the subsets of occupied orbitals in $\ket{\Phi}$.
The division into $\act$ and $\env$ is often based on the localization of the orbitals on specific atoms in the molecule \cite{Manby2012, hegely2016exact}. Alternatively, a unitary transformation in $\mathcal N$ may be transformed to enforce a localization on specific atoms \cite{Claudino2019}, or an automated orbital analysis can be performed to identify orbitals contributing strongly to energy differences \cite{Bensberg2019a}.
The idea of the projection-based embedding is now again to try to solve the minimization problem
\begin{align*}
  \min_{\ket{\Psi}} \bra{\Psi} H \ket{\Psi},   \qquad \ket{\Psi} = \ket{\Phi_{\act}} \wedge \ket{\Psi_{\env}},
\end{align*}
where $\ket{\Phi_{\act}}$ is an arbitrary state on $\act$ and $\ket{\Psi_{\env}}$ is the Slater determinant on the environment where the orbitals in $\mathcal N_{\env}$ are occupied. Similarly to the embedding based on density matrices, the Slater determinant on the environment is kept fixed, which yields the effective Hamiltonian for the fragment
\begin{align*}
  H_{\act} & = \left(I \ot \bra{\Psi_{\env}}\right) H \left(I \ot \ket{\Psi_{\env}}\right)                                   \\
           & = \bar{E} + \sum_{p,q \in \act} h^{(\act \text{in} \env)}_{pq}a^\dagger_p a_q + \frac{1}{2} \sum_{\substack{p,q \\ r,s \in \act}} (pq|rs) a^\dagger_p a^\dagger_r a_s a_q,
\end{align*}
where
\begin{align*}
  \bar{E}=2\sum_{p\in\mathcal N_{\env}}h_{pp} + \sum_{p,q\in\mathcal N_{\env}}[ 2(pp|qq)-(pq|qp)]
\end{align*}
is the mean-field energy of the environment, and
\begin{align*}
  h^{(\act \text{in} \env)}_{pq}=h_{pq} + \sum_{c\in\mathcal N_{\env}}[ 2(pq|cc)-(pc|qc)]
\end{align*}
is the effective one-electron operator that includes the interaction between the fragment and the environment. In this case, both the environment energy and the fragment-environment interactions are captured on a mean-field level only, neglecting the correlation, while the overall energy is represented as $E=\bar{E}+E^{\act\text{in}\env}$, where $E^{\act\text{in}\env}$ is the energy of $h^{(\act \text{in} \env)}$.
Note that the Hamiltonian $H_A$ on the quantum core has as two-body terms precisely the two-body terms acting on $A$, and only the one-body term is modified.

The key idea of the projection-based embedding is to recover the missing correlation using DFT. For this purpose, densities $\covmat$ from the initial DFT calculation are used to define the interaction terms between the fragment and the environment as
\begin{align*}
  v_{\text{emb}}[\covmat^\act,\covmat^\env] & = g[\covmat^\act+\covmat^\env] - g[\covmat^\act],
\end{align*}
where $g$ is defined by
\begin{align*}
   & (g[\covmat])_{pq}                         = \sum_{ij} \covmat_{ij}\left[(pq|ij)-\frac{1}{2}x(pi|qj)\right] + \left(v_{\text{xc}}[\covmat]\right)_{pq}.
\end{align*}
Here, $x$ is the fraction of the exact exchange in the exchange-correlation functional and $v_{\text{xc}}$ is the exchange-correlation potential matrix. This interaction term is used to redefine the effective one-electron operator as
\begin{align*}
  \tilde{h}^{(\act \text{in} \env)}_{pq} = & h_{pq} + \left(v_{\text{emb}}[\covmat^{\act},\covmat^\env]\right)_{pq}  \\
                                           & - \left((F^\act - \mu) P^{\env} - P^{\env} (F^\act - \mu)\right)_{pq} ,
\end{align*}
where $P^{\env}$ is the matrix of the projection operator onto the subspace $\env$, $\mu$ is a positive constant energy shift and $F^\act$ is the Fock matrix for the embedded fragment. The last two terms are used to ensure that there is no mixing of the fragment and the environment orbitals. The overall energy of the system can be written as
\begin{align*}
  E = & \bra{\Psi_\act}\tilde{H}_\act \ket{\Psi_\act} -\bra{\Psi_\act} \hat{v}_{\text{xc}}[\covmat^\act+\covmat^\env] - \hat{v}_{\text{xc}}[\covmat^\env]\ket{\Psi_\act} \\
      & + E_{\text{xc}}^{\text{DFT}}[\covmat^\act+\covmat^\env] - E_{\text{xc}}^{\text{DFT}}[\covmat^\act] - E_{\text{xc}}^{\text{DFT}}[\covmat^\env]                    \\
      & + E^{\text{DFT}}[\covmat^\env],
\end{align*}
where $E^{\text{DFT}}[\covmat^\env]$ is the Kohn-Sham DFT energy of $\env$, $E_{\text{xc}}^{\text{DFT}}[\covmat]$ is the exchange-correlation functional evaluated for the density $\covmat$, and $\hat{v}_{\text{xc}}[\covmat]$ is the operator belonging to $v_{\text{xc}}$.
Note that $E_{\text{xc}}^{\text{DFT}}[\covmat]$ is typically nonlinear in $\covmat$, which means that it cannot be calculated as the expectation value of $\hat{v}_{\text{xc}}[\covmat]$. Therefore, we formally subtracted the term $\bra{\Psi_\act} \hat{v}_{\text{xc}}[\covmat^\act+\covmat^\env] - \hat{v}_{\text{xc}}[\covmat^\env]\ket{\Psi_\act}$ and added the differences in $E_{\text{xc}}^{\text{DFT}}[\covmat]$.

\subsection{Embedding methods and quantum computing}
Using appropriate embedding methods will be crucial for the application of quantum computers to chemical systems, especially for biochemistry, nanochemistry, and materials science.
Identifying the right quantum mechanical subsystem of a molecule and using the quantum chemistry calculation on this fragment to derive conclusions about the full system is unavoidable for practical applications of quantum computation.
The quantum computer then serves as the high-level solver in the embedding scheme.

Previous work using density matrix-based embedding schemes includes Refs.~\cite{mineh2022solving,tilly2021reduced,kawashima2021optimizing,liu2023bootstrap,cao2023ab}, approaches using projection-based embeddings include Refs.~\cite{ralli2024scalable, rossmannek2023quantum, battaglia2024general}, and approaches based on Green's functions are presented in Refs.~\cite{vorwerk2022quantum,bauer2016hybrid} (which we do not discuss in this work).
These previous works have mostly focused on reducing the problem size, and running heuristic variational algorithms on (near-term) quantum computers.
How the embedding method influences the problem of finding good guiding states, and thereby the prospects for phase estimation, has not been explored systematically in previous work to the best of our knowledge; one work in this direction is \cite{goings2022reliably} which computes guiding state overlaps for a range of active space sizes for phase estimation for a protein system.

\section{Orbital selection for state preparation in quantum impurity problems}\label{sec:impurity}

In this section, we aim at deriving rigorous results for the state preparation problem in an embedding situation. To to accomplish this, we exploit the framework of quantum impurity problems (see \cref{fig:impurity}).
A \emph{quantum impurity model} is a fermionic Hamiltonian which has arbitrary one-body interactions, and two-body interactions on a constant-sized subset of orbitals.
Quantum impurity models are relevant as models in material science where there is an impurity in the material.
A famous example is the Anderson impurity model \cite{anderson1961localized} which describes a magnetic impurity in a metal. This model gives an explanation for the Kondo effect and has been instrumental in the development of numerical renormalization group techniques \cite{wilson1975renormalization}.
In addition to the important application to impurities in metals, one can also consider them as a toy model for large molecules where mean-field approximations are accurate outside of a small interaction region.
For example, the `impurity' may consist of the orbitals that are localized in the binding region.
Note that in the quantum impurity model, we assume that away from the impurity we have a one-body Hamiltonian (which is a different assumption from having an accurate mean-field approximation).

More formally, a quantum impurity model is defined by a fermionic Hamiltonian of the form $H_{\free} + H_{\imp}$ where $H_{\free}$ is a one-body Hamiltonian on $N$ fermionic modes, whereas $H_{\imp}$ is an interacting fermionic Hamiltonian acting on a subset of $M$ modes
\begin{align*}
  H_{\free} & = \sum_{pq = 1}^N h_{pq} a_p^\dagger a_q,                     \\
  H_{\imp}  & = \sum_{pqrs = 1}^M h_{pqrs} a_p^\dagger a_q^\dagger a_r a_s.
\end{align*}
Typically we consider constant $M$ and a scaling total system size $N$.
We assume that $H_{\free}$ has single-particle energies in a constant range (which we can always achieve by rescaling).
We write $\gap$ for the ground state energy gap of $H_{\free}$.
If the Hamiltonian is number-conserving, we can study the lowest-energy state in the subspace where we fixed the number of particles to $n$.

\begin{figure}
  \centering
  \includegraphics[scale=0.5]{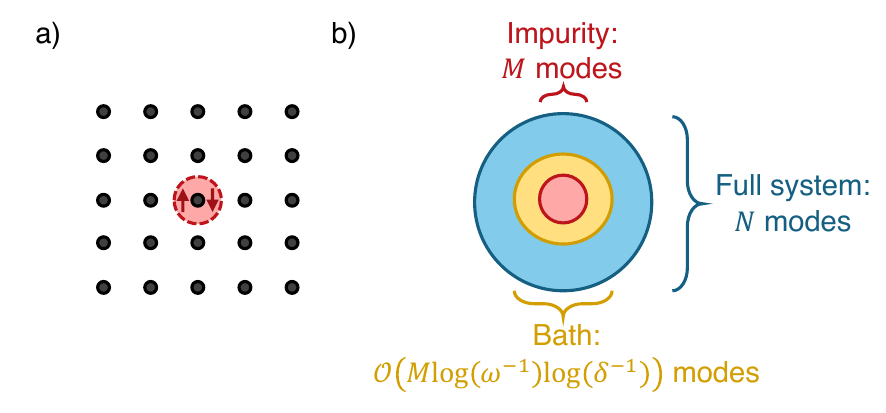}
  \caption{(a) A lattice model for the electrons in a metal, with an impurity at one site.
    (b) Illustration of \cref{thm:Slater det impurity}: the ground state can be accurately approximated by a state which is a Slater determinant on the environment and some unconstrained state on the fragment and bath modes.}
  \label{fig:impurity}
\end{figure}

Quantum impurity models are a simple but nontrivial class of models that capture the orthogonality catastrophe and the quality of embedding methods.
The Anderson derivation of the orthogonality catastrophe \cite{anderson1967infrared}, see \cref{sec:orthogonality}, suggests that the scaling of the orthogonality catastrophe may be polynomial rather than exponential for quantum impurity problems. For quantum phase estimation an inverse polynomial overlap suffices to get a polynomial algorithm, so this suggests that quantum impurity problems may be in BQP.

Secondly, from the perspective of embedding methods, since the impurity is small compared to the system size, one may hope that one can identify a \emph{bath} of a relatively small number of orbitals that are influenced by the impurity, and one can solve the full system by restricting to this bath.

Quantum impurity problems have been studied rigorously by Bravyi and Gosset in \cite{bravyi2017complexity}. Their main technical result is a bound on the decay of the eigenvalues of the 1-RDM.
This gives rise to approximations of the ground state by a relatively small number of Gaussian states.
As an application, Ref.~\cite{bravyi2017complexity} gives a quasipolynomial classical algorithm for the ground state energy, and shows that the quantum impurity problem is in QCMA, which is the class of problems that can be efficiently verified by a quantum computer given a classical witness, and which is between NP and QMA.

Our results build on this work in two ways. Firstly, we argue that under certain conditions one can find an efficiently preparable state that has inverse polynomial overlap with the ground state, giving an efficient quantum algorithm for the ground state energy.

Secondly, the results in \cite{bravyi2017complexity} are formulated in terms of Gaussian states rather than Slater determinants.
In order to relate their results to usual approaches to embedding methods in quantum chemistry, we derive the analagous statements using Slater determinants, working at a fixed particle number.
A detailed description of the set-up, as well as formal statements and proofs are provided in \cref{sec:proof impurity}.

\subsection{Decay properties of the one-body reduced density matrix and good embeddings}
One-electron basis states (orbitals) are key to understanding different embedding techniques \cite{Muehlbach2018}.
Given an electronic Hamiltonian on a set of orbitals and a fixed number of electrons, orbitals are often partitioned into sets that are called \emph{frozen} (doubly occupied), \emph{active}, and \emph{virtual} (unoccupied).
The idea is that, in an expansion of the wavefunction as in \cref{eq:basis expansion},
only the frozen and a few additional orbitals (that is, the active ones) give rise to determinants with large
weights $\alpha_i$.
This means that the problem reduces to an effective problem defined only on the active space.
The frozen and virtual spaces may contribute to the dynamical correlation, but this can be accounted for by a computationally less intensive method (for example, by perturbation theory).
The selection of the active orbitals can be based on various measures such as natural orbital occupation numbers \cite{jensen1988MP2NO,bofill1989UNO,krausbeck2014NOexcited,keller2015UNO} or information entropy based \cite{legeza2003entropy,rissler2006entropy,boguslawski2012entropy,stein2016automated}.
Our formal analysis will focus on the former in the following, whereas for practical reasons, our numerical results will exploit the latter, which can be based on localized orbitals that can have advantages over natural orbitals. In any case, as we review in \cref{sec:orthogonality}, orbital rotations are always possible, which would allow one to exploit natural orbitals for state preparation, which can then be rotated into a local basis.

Given a ground state $\ket{\Psi}$, the \emph{one-body reduced density matrix} (1-RDM), or covariance matrix, is the matrix $\covmat$ defined by $\covmat_{pq} = \bra{\Psi} a_p^\dagger a_q \ket{\Psi}$.
Its eigenvectors define a set of orbitals known as the \emph{natural orbitals}.
We order the eigenvalues of $\covmat$ as $1 \geq \sigma_1 \geq \sigma_2 \geq \dots \geq \sigma_N \geq 0$. The eigenvalue $\sigma_j$ equals the expected particle number for natural orbital $j$.
This means that the natural orbitals for which $\sigma_j$ equals $0$ or $1$ are always unoccupied or occupied respectively in the ground state. If the eigenvalue $\sigma_j$ is close to $0$ or $1$, we may approximate the ground state by a state where the corresponding natural orbital is unoccupied or occupied respectively.
In other words, if we can bound the number of orbitals for which $\sigma_j$ is \emph{not} close to $0$ or $1$ by $K$, then we can approximate the ground state $\ket{\Psi}$ by a state $\ket{\tilde\Psi} = \ket{\Phi} \wedge \ket{\Theta}$, where $\ket{\Theta}$ is a Slater determinant and $\ket{\Phi}$ is an arbitrary state on $K$ modes.
This means that these $K$ modes define a good active space for the problem at hand, and we can reduce the problem of (approximately) computing the ground state energy to this subspace.

The main result of \cite{bravyi2017complexity} is a bound on the eigenvalues of the 1-RDM of quantum impurity problems, which is used to approximate the ground state in terms of Gaussian states.
For chemistry problems we typically have a fixed number of electrons, and it is more convenient to work with Slater determinants.
We adapt the argument of \cite{bravyi2017complexity} to show the following result, which we prove in \cref{sec:proof impurity}.

\begin{thm}\label{thm:Slater det impurity}
  Let $\gap > 0$ be the ground state energy gap of $H_{\free}$, and let $\eps > 0$. Then for
  \begin{align*}
    K = \bigO\mleft(\log(\gap^{-1}) \mleft(\log(\eps^{-1}) + \log \log(\gap^{-1}) \mright) \mright)
  \end{align*}
  there exists a Slater determinant $\ket{\Theta}$ on $N - K$ modes and an arbitrary state $\ket{\Phi}$ on $K$ modes such that the state
  \begin{align*}
    \ket{\tilde\Psi} = \ket{\Phi} \wedge \ket{\Theta}
  \end{align*}
  has overlap $\abs{\braket{\tilde\Psi}{\Psi}} \geq 1 - \eps$ with a ground state $\ket{\Psi}$ of $H_{\free} + H_{\imp}$.
\end{thm}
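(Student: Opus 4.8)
The plan is to reduce the statement to a decay estimate on the occupation numbers of the natural orbitals, and then obtain $\ket{\tilde\Psi}$ by ``rounding'' the nearly-frozen and nearly-empty natural orbitals to definite occupations. Concretely, I would work in the natural orbital basis that diagonalizes $\covmat$, with eigenvalues $\sigma_1 \geq \dots \geq \sigma_N$, and measure the ``activity'' of orbital $j$ by the deviation $\delta_j = \min(\sigma_j, 1 - \sigma_j)$, which is small precisely when the orbital is almost surely occupied or almost surely empty in $\ket{\Psi}$.

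First I would import the eigenvalue-decay bound of \cite{bravyi2017complexity}, whose main technical content is that the occupation numbers of the ground state of a quantum impurity model cluster exponentially fast around $0$ and $1$. Translated to the $\delta_j$, this says that after sorting the deviations in decreasing order they obey a bound of the form $\delta_{(k)} \leq C\exp(-ck/\log(\gap^{-1}))$ for absolute constants $C, c > 0$, the decay rate being set by the gap $\gap$ of $H_{\free}$. Summing the geometric tail, the total deviation of all but the top $K$ orbitals satisfies $\sum_{k > K}\delta_{(k)} \leq \eps$ precisely when $K = \bigO(\log(\gap^{-1})(\log(\eps^{-1}) + \log\log(\gap^{-1})))$, which fixes the claimed size of the active space.

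Next I would designate the $K$ most active natural orbitals as the active space carrying $\ket{\Phi}$, and the remaining $N - K$ orbitals as the environment. For each environment orbital I would fix the occupation to $1$ if $\sigma_j > \tfrac12$ and to $0$ otherwise, defining a Slater determinant $\ket{\Theta}$ on those $N-K$ modes, and I would take $\ket{\tilde\Psi} = P\ket{\Psi}/\norm{P\ket{\Psi}}$, where $P$ projects onto the subspace with these definite environment occupations. Because $P$ fixes the environment to a single occupation configuration, $P\ket{\Psi}$ factorizes as $\ket{\Phi}\wedge\ket{\Theta}$ with $\ket{\Phi}$ an (a priori arbitrary) state on the $K$ active modes, giving the required form; note that $P$ commutes with the total particle number, so $\ket{\tilde\Psi}$ lies in the same particle-number sector as $\ket{\Psi}$. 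The overlap is then $\abs{\braket{\tilde\Psi | \Psi}} = \norm{P\ket{\Psi}} = \sqrt{\bra{\Psi} P \ket{\Psi}}$, and a union bound over the environment orbitals gives $1 - \bra{\Psi}P\ket{\Psi} \leq \sum_{j \in \text{env}}\delta_j = \sum_{k>K}\delta_{(k)} \leq \eps$, so that $\abs{\braket{\tilde\Psi|\Psi}} \geq \sqrt{1 - \eps} \geq 1 - \eps$.

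The main obstacle is the translation in the first step. The bound of \cite{bravyi2017complexity} is phrased for Gaussian states and the Majorana covariance matrix, which need not conserve particle number, whereas the statement here demands a genuine Slater determinant and a fixed electron number. The work is to restrict their argument to the number-conserving sector, where the anomalous blocks of the covariance matrix vanish and its eigenvalues reduce to $2\sigma_j - 1$, so that the decay of $1 - \abs{2\sigma_j - 1} = 2\delta_j$ is controlled at the same rate; one must check that the counting of active modes and the dependence on $\gap$ survive this reduction, and that the rounded environment configuration is compatible with the global particle number (which it is, forcing $\ket{\Phi}$ into a fixed active-space particle-number sector). Everything downstream --- the geometric tail sum and the union bound --- is routine once this decay estimate is in hand.
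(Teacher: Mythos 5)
Your overall architecture matches the paper's: identify nearly-frozen and nearly-empty one-particle modes from a 1-RDM with exponentially clustered occupations, project the ground state onto a fixed occupation pattern for all but $K$ of them, bound the overlap loss by a tail sum of occupation deviations (your commuting-projector union bound is valid and in fact slightly sharper than the $\sqrt{\delta_j}$ bound the paper obtains in \cref{lem:excitation projection}), and choose $K$ so the geometric tail is below $\eps$. The gap is exactly where you suspect it is, but your proposed fix does not work as stated. The Bravyi--Gosset decay bound controls the eigenvalues of the 1-RDM $\covmat_{jk}=\braket{\Psi|b_j^\dagger b_k|\Psi}$ with respect to the \emph{particle-hole transformed} modes $b_j$ (occupations measured relative to the free Fermi sea, all decaying to zero). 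When you pass to the honest number-conserving 1-RDM $\covmat'_{jk}=\braket{\Psi|a_j^\dagger a_k|\Psi}$ and write it in a basis adapted to the particle/hole split, its diagonal blocks are indeed $\mathbb{1}-\covmat^T$ and $\covmat$, but its off-diagonal block consists of the pairing correlators $\braket{\Psi|b_jb_k|\Psi}$. These do \emph{not} vanish: $\ket{\Psi}$ conserves $a$-particle number but not $b$-particle number, since the number of excitations above the Fermi sea is not conserved by $H_{\imp}$. Consequently the spectrum of $\covmat'$ is not simply $\{\sigma_j\}\cup\{1-\sigma_j\}$, and your sorted-deviation bound $\delta_{(k)}\leq C\exp(-ck/\log(\gap^{-1}))$ for the \emph{global} natural orbitals of $\covmat'$ does not follow directly from the imported theorem. (The naive consequence one does get for free, $\sum_j \sigma'_j(1-\sigma'_j)\leq\Tr[\covmat]$, only yields a $1/k$ decay of the sorted deviations, which is far too weak for your tail sum.)

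The paper circumvents this by never diagonalizing $\covmat'$ globally: it compresses $\covmat$ to subspaces $\mathcal L_{\pm}$ contained entirely within the hole sector or entirely within the particle sector (and within the orthogonal complement of the impurity modes), where the relation between $\covmat$ and $\covmat'$ is a clean affine one, and transfers the eigenvalue decay to these compressions via the Cauchy interlacing theorem; the frozen modes are then eigenvectors of the compressed blocks rather than of the full matrix. Your route is probably salvageable --- for instance, since $\covmat'(\mathbb{1}-\covmat')\geq 0$, it is dominated by twice its block-diagonal part, whose sorted eigenvalues inherit the exponential decay of the $\sigma_j$ up to a factor of two in the index, and Weyl's inequality then controls the sorted deviations of $\covmat'$ --- but some argument of this kind is genuinely needed and is the actual content of the reduction; "the anomalous blocks vanish in the number-conserving sector" is not it. A second, minor point: the paper additionally arranges the frozen Slater determinant to live on modes commuting with $H_{\imp}$ (used downstream), which your global natural-orbital choice would not provide.
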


Note that while this result shows that there \emph{exists} a good active space, it only provides a good way to determine which orbitals make up this active space given the ground state 1-RDM (which one does not necessarily have access to without a priori knowledge of the ground state).
In practice, finding good active spaces is based on using an efficient classical method to approximate the 1-RDM and using this estimate to select an appropriate accurate space as discussed in \cref{sec:embedding}.

\subsection{Computational complexity of the quantum impurity problem}
The general electronic structure problem is QMA-complete \cite{schuch2009computational,ogorman2022intractability}, meaning that it is likely a hard problem even for quantum computers.
On the other hand, Hamiltonians with only one-body interactions, like $H_{\free}$, can be efficiently simulated classically \cite{bravyi2004lagrangian}.
An important broad question in quantum computing is to determine the existence of (physically relevant) families of Hamiltonians for which the ground state problem is in BQP, while computing the ground state energy is hard classically.
Proving such a separation is difficult (as this would imply a separation between P and BQP), but one can at least try to prove BQP containment for specific families of Hamiltonian ground state energy problems for which we do not know rigorous polynomial-time classical algorithms, see for example \cite{chen2024sparse,hastings2022optimizing}.

We investigate the quantum computational complexity of quantum impurity problems.
Here, the goal is to approximate the ground state energy of an impurity problem on $N$ modes, with a constant-sized impurity, to precision $\eps$.
As we saw in the previous section, for quantum impurity problems there exist relatively small active spaces, significantly reducing the size of the problem.
It was shown in \cite{bravyi2017complexity} that this can be leveraged to give a \emph{classical} algorithm that scales polynomially in $N$ and quasipolynomially in the precision $\eps$, with overall scaling $\bigO(N^3) \exp(\bigO(\log(\eps^{-1})^3))$.
It is also shown that the ground state energy problem for quantum impurity models is contained in QCMA, the class of problems for which a quantum computer can efficiently verify solutions given a classical witness. QCMA-hard problems are unlikely to be solvable efficiently on a quantum computer.
This leaves the interesting open problem whether one can prove whether quantum computers can efficiently approximate the ground state energy of quantum impurity problems to polynomial precision.
We argue that under two different restrictions, quantum impurity problems become easy for quantum computers (while not obviously so for classical algorithms).

The main observation is that when we apply \cref{thm:Slater det impurity}, it suffices to find an active space such that the ground state overlap is constant (but not close to 1).
If this active space contains only a constant number of orbitals $K$, then we may consider the state $\rho$ corresponding to choosing a random active space of size $K$ and preparing the maximally mixed state for this active space.
For constant $K$, this leads to a ground state overlap that is at least $\poly(N)^{-1}$, and hence $\rho$ can be used as a guiding state to yield a polynomial time algorithm.
Indeed, \cref{thm:Slater det impurity} gives a constant-sized active space for constant precision if the gap $\gap$ is constant.
As a further comment, these assumptions change the scaling of the classical algorithm of \cite{bravyi2017complexity} to $\bigO(N^3) \exp(\bigO(\log(\eps^{-1})^2))$ in the case of constant gap.
This provides evidence that quantum impurity models may provide an example class of systems with polynomial time quantum algorithm, but no polynomial time classical algorithm. However, the precise status of this remains unclear, in particular since there is already a classical algorithm with quasipolynomial scaling, and an improved analysis or better classical algorithm could lead to polynomial scaling.
Moreover, the potential advantage disappears if the full quantum impurity Hamiltonian is gapped, as \cite{bravyi2017complexity} provides a polynomial classical algorithm for that situation based on matrix product states (MPS).

A second simplification arises in the case where we have a sufficiently accurate approximation of the 1-RDM of a ground state.
In this case, we can construct an active space on $\log(\omega^{-1})$ orbitals which suffices for a constant overlap with the ground state. Since this active space has polynomial dimension, the maximally mixed state on this subspace gives a sufficiently good guiding state for phase estimation.
We summarize these findings with the following result, formally stated and proven as \cref{thm:impurity complexity} in \cref{sec:proof impurity}.

\begin{thm}\label{thm:complexity impurity informal}
  The quantum impurity problem for $H = H_{\free} + H_{\imp}$ is in BQP if either $H_{\free}$ has a constant gap, or if the ground state 1-RDM is given.
\end{thm}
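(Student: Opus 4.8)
The plan is to establish BQP containment in the standard two steps: (i) efficiently prepare a guiding state $\rho$ whose overlap $\mathrm{Tr}(\rho\,\ket{\Psi}\bra{\Psi})$ with a ground state $\ket{\Psi}$ is at least $1/\poly(N)$, and (ii) run quantum phase estimation with $\rho$ as the guiding state to extract the ground state energy to precision $\eps$. Step (ii) is routine here: the impurity Hamiltonian $H = H_{\free} + H_{\imp}$ is a sum of $\poly(N)$ fermionic terms, so $e^{-\mathrm{i}Ht}$ can be simulated in time $\poly(N,t)$ by standard methods, and phase estimation then costs $\poly(N,1/\eps,1/\delta)$ when the overlap is $\delta \geq 1/\poly(N)$; a mixed guiding state is handled by sampling a pure state from its ensemble and averaging the success probability. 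Hence the entire content is in step (i), and \cref{thm:Slater det impurity} is the instrument I would use.

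The key tool that turns \cref{thm:Slater det impurity} into a \emph{preparable} state is what I will call the maximally-mixed-active-space bound. \cref{thm:Slater det impurity} only guarantees the \emph{existence} of an arbitrary state $\ket{\Phi}$ on the $K$ active modes with $\ket{\tilde\Psi} = \ket{\Phi}\wedge\ket{\Theta}$ and $\abs{\braket{\tilde\Psi | \Psi}} \geq 1-\eps$; we do not know $\ket{\Phi}$. I would sidestep this by replacing $\ket{\Phi}\bra{\Phi}$ with the maximally mixed state $\rho_{\mathrm{mm}}$ on the relevant $k$-particle sector of the active space, of dimension $d = \binom{K}{k}$, and taking $\rho = \rho_{\mathrm{mm}} \ot \ket{\Theta}\bra{\Theta}$. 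Writing $\ket{\Phi} = \sum_i c_i \ket{\chi_i}$ in an orthonormal basis $\{\ket{\chi_i}\}$ of that sector and applying Cauchy--Schwarz gives
\begin{align*}
  (1-\eps)^2 \leq \abs{\braket{\tilde\Psi | \Psi}}^2 \leq \sum_i \abs{\braket{\chi_i \wedge \Theta | \Psi}}^2 = d\cdot \mathrm{Tr}\!\left(\rho\,\ket{\Psi}\bra{\Psi}\right),
\end{align*}
so $\mathrm{Tr}(\rho\,\ket{\Psi}\bra{\Psi}) \geq (1-\eps)^2/d$. It therefore suffices to (a) know the active \emph{subspace} (not the individual natural orbitals, since $\rho_{\mathrm{mm}}$ is basis-independent within it) together with the environment Slater determinant $\ket{\Theta}$, and (b) ensure $d = \poly(N)$.

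When the ground state 1-RDM $\covmat$ is given, both requirements are immediate. I would diagonalize $\covmat$ to obtain the natural orbitals and occupations $\sigma_j$, declare the active space to be the orbitals with $\sigma_j$ bounded away from $0$ and $1$, let $\ket{\Theta}$ occupy exactly those with $\sigma_j \approx 1$, and leave those with $\sigma_j \approx 0$ empty. The eigenvalue decay underlying \cref{thm:Slater det impurity} certifies that, for constant target overlap, thresholding selects an active space of size $K = \bigO(\log(\gap^{-1}))$, so $d \leq 2^K = \poly(N)$ whenever $\gap \geq 1/\poly(N)$; the state $\rho$ is then a tensor of a maximally mixed state on a known polynomial-dimensional sector with a known Slater determinant, hence efficiently preparable, and phase estimation finishes the argument.

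The constant-gap case is where the \textbf{main obstacle} lies. Here \cref{thm:Slater det impurity} gives a \emph{constant}-sized active space ($K = \bigO(1)$, hence $d = \bigO(1)$), so the overlap lost to the mixed state is only a constant; the difficulty is that, without $\covmat$, we know neither \emph{which} subspace is active nor the environment configuration $\ket{\Theta}$. Because $K$ is constant I would resolve this by an averaging/confinement argument: restrict the unknown active space to a \emph{known} polynomial-dimensional orbital subspace $U$ — for instance the span of the $M$ impurity modes together with a bounded-depth Krylov expansion under $H_{\free}$, outside of which the gapped free Hamiltonian fixes the occupied orbitals and hence $\ket{\Theta}$ — and take $\rho$ to be maximally mixed over the $\bigO(1)$-particle sector of $U$ tensored with the mean-field Slater determinant on $U^\perp$. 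Since $\dim U = \poly(N)$ and at most $k = \bigO(1)$ particles live in the active region, this sector has dimension $\binom{\dim U}{k} = \poly(N)$, preserving an overall overlap $\geq 1/\poly(N)$. The step I expect to require the most care is proving that such a polynomial family provably contains a good active space — i.e.\ re-opening the construction behind \cref{thm:Slater det impurity} to show that, in the gapped case, the correlated natural orbitals are locatable from $H_{\free}$ and the impurity support alone (or that a uniform mixture over a $\poly(N)$-size collection of candidate active spaces suffices), since a Haar-random $K$-dimensional subspace would match only with exponentially small probability.
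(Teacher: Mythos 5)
Your high-level strategy---prepare a maximally mixed state on a known, polynomial-dimensional subspace that provably contains the approximation $\ket{\tilde\Psi}$ from \cref{thm:Slater det impurity}, then run phase estimation---is exactly the paper's, and your Cauchy--Schwarz bound $\Tr(\rho\ket{\Psi}\bra{\Psi})\geq \abs{\braket{\tilde\Psi|\Psi}}^2/d$ is the same inequality the paper uses. But both branches have genuine gaps. In the constant-gap case, the ``main obstacle'' you identify dissolves, and your proposed fix (confining the active space to a Krylov expansion of the impurity modes under $H_{\free}$ and proving that this $\poly(N)$-size family contains a good active space) is exactly the step you leave unproven---and it is not needed. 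The paper's observation is that $\ket{\tilde\Psi}=\ket{\Phi}\wedge\ket{\Theta}$ is automatically a superposition of Fock states with at most $K$ excitations relative to the ground state of $H_{\free}$ \emph{alone}: the environment determinant $\ket{\Theta}$ agrees with the free ground state configuration, and the active space has dimension at most $K$, so the hole-plus-particle count in the $b$-modes is at most $K$. The subspace $V$ of all states with at most $K$ excitations is therefore known in advance (diagonalize the one-body matrix $h$; no knowledge of $\covmat$ or of which modes are ``active'' is required), has dimension $\sum_{k\leq K}{N\choose k}=\poly(N)$ for constant $K$, and its maximally mixed state is prepared by sampling uniform bit strings of Hamming weight at most $K$. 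This removes the step you flag as requiring the most care.

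In the 1-RDM branch your argument as written does not reach BQP. For constant target overlap, thresholding the natural occupations gives an active space of size $K=\bigO\mleft(\log(\gap^{-1})\log\log(\gap^{-1})\mright)$, so the sector you mix over has dimension up to $2^K$, which is \emph{quasipolynomial} in $\gap^{-1}$; and the theorem assumes no lower bound on $\gap$, so your added hypothesis $\gap\geq 1/\poly(N)$ is not available. The paper repairs both points: it first reduces to $\gap=\Omega(\precision)$ by truncating the low-energy modes of $H_{\free}$ (Lemma 5 of \cite{bravyi2017complexity}), and then, rather than mixing uniformly over the whole active Fock space, it partitions the natural orbitals into blocks $I_s$ of size $Q=\lceil 14M\log(2\gap^{-1})\rceil$, bounds the expected excitation number in block $s$ by $\bigO(M\log(\gap^{-1}))e^{-s}$ using the eigenvalue decay, and applies Markov's inequality plus a union bound to obtain a projector $P$ that captures at least half of the ground state weight while $\Tr[P]=e^{\bigO(M\log(\gap^{-1}))}=\poly(\precision^{-1})$. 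Your uniform mixture is exponentially larger than this restricted subspace, so the overlap guarantee degrades from inverse polynomial to inverse quasipolynomial; to close the gap you would need either the paper's blockwise confinement argument or some equivalent refinement.
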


Of course, one generally needs the ground state in order to compute the 1-RDM, so if one requires it beforehand then this does not directly lead to a useful algorithm.
The above result should however be seen as evidence for the usefulness of iterative approaches, where one starts with some estimate of the 1-RDM using computationally inexpensive classical means, and then uses a quantum computer to perform a calculation that leads to a more accurate approximation of the 1-RDM.
See \cite{bauer2016hybrid} for a practical proposal for such a scheme.

The algorithms in \cref{thm:complexity impurity informal} and their analysis do not directly lead to \emph{practically} useful approaches, as their polynomial scaling can be of high degree.
However, these results should be seen as a proof of principle that in quantum impurity problems the `orthogonality catastrophe' is of a relatively mild nature and may not be an insurmountable problem, reducing a potentially exponential scaling to a polynomial one.
In practice, one can likely find better guiding states than the maximally mixed states in the above theoretical algorithms.
Of course, classical algorithms for impurity problems will also perform much better than currently proven by rigorous guarantees \cite{kotliar2006electronic,gull2011continuous,bulla2008numerical}.

Given a large system of $N$ orbitals, where we know that the problem has a strong electron correlation in some subsystem, one can use a quantum embedding method (such as described in \cref{sec:embedding}) to find a reasonable active space $A$ of $K$ modes, and an environment system $C$. One then uses a classical method to find a good guiding state $\ket{\Psi_A}$ on this small active space, and a Slater determinant $\ket{\Psi_C}$ on $C$, and uses $\ket{\Psi_A} \wedge \ket{\Psi_C}$ as a guiding state on the full system.
\cref{thm:Slater det impurity} shows that for a quantum impurity model there \emph{exist} good guiding states of this form, but it does not guarantee that one finds the right subsystem $A$ or state $\ket{\Psi_A}$ in this way.

Finally, we note that the concept of impurity models has also been used to compute properties of strongly correlated materials, as in the Hubbard model. In these cases, a local patch of the system is treated as an impurity, and the interaction with the remainder is treated at a mean-field level \cite{georges1996dynamical}.
It has been proposed to use this framework, with a quantum computer as the impurity solver, for strongly correlated systems \cite{bauer2016hybrid}.
It was demonstrated in \cite{tubmanPostponingOrthogonalityCatastrophe2018} that this leads to a highly multi-configurational problem on the quantum core, but that still, with only 10 determinants, an overlap of around 0.05 can be achieved.

\section{Numerical examples}\label{sec:numerical}

We now proceed to a numerical investigation of two different embedding schemes following the concepts discussed above, namely of a scheme used in the bootstrap embedding (as an example of a density-matrix embedding) and of the Huzinaga (projection) embedding, to assess the influence of quantum embedding on preparation of states with large ground state overlap.
We investigate the relations between the overlap of the mean-field HF basis state, as well as more complex states, with the target state for different spatial sizes of a quantum core and for growing orbital spaces.
As we will see, these different embedding strategies produce rather different orbital bases. 
We study two systems that are exemplary for biomolecules: tryptophan in \cref{sec:tryptophan} (one of the essential amino acids) and a compound containing ruthenium in \cref{sec:ruthenium} with prospects as an anti-cancer drug.
Some of the computational methods are described in more detail in \cref{sec:numerical_methods}.

In general, we observe that in all cases we encounter, the (easy-to-prepare) mean-field state has large overlap with the ground state of the embedded Hamiltonian, with overlaps larger than $0.9$ in almost all cases.
By using more complex states, the overlap can be brought closer to 1.
This does not greatly influence the total cost of quantum phase estimation, but it does allow paralellization with reduced maximal depth (and in many cases this depth reduction will be larger than the additional circuit depth from the more complicated guiding state).

\subsection{Tryptophan}\label{sec:tryptophan}
Some of the twenty essential amino acids, the fundamental building blocks of biomacromolecules, contain unsaturated side chains that are therefore suitable candidates for probing the role of static correlation in biomacromolecules.
Out of these aromatic amino acids, we chose tryptophan as an example.
In addition to this monomeric protein building block, we also study a sequence of tryptophan molecules resembling an oligopeptide structure (\cref{fig:tryptophan}) that can be considered a limiting case for proteins (noting that the primary sequence of amino acids will not necessarily show repetitions composed of only one amino acid).
This situation is used to model a biomolecule of increasing size, so we can see how the overlap of the HF state or sum-of-Slater states decreases with increasing system size (peptide length).
An easy approximation is to consider the tryptophan residues in this series as noninteracting (so the ground state is a product state) so that we can easily estimate the exponential decline of the overlap of the HF determinant and the sum-of-Slater state with peptide length. Switching on the weak interaction only slightly changes the overlap, as we demonstrate for the di- and tripeptides of tryptophan.

\begin{figure}
  \centering
  \includegraphics[width=0.95\linewidth,grid=false]{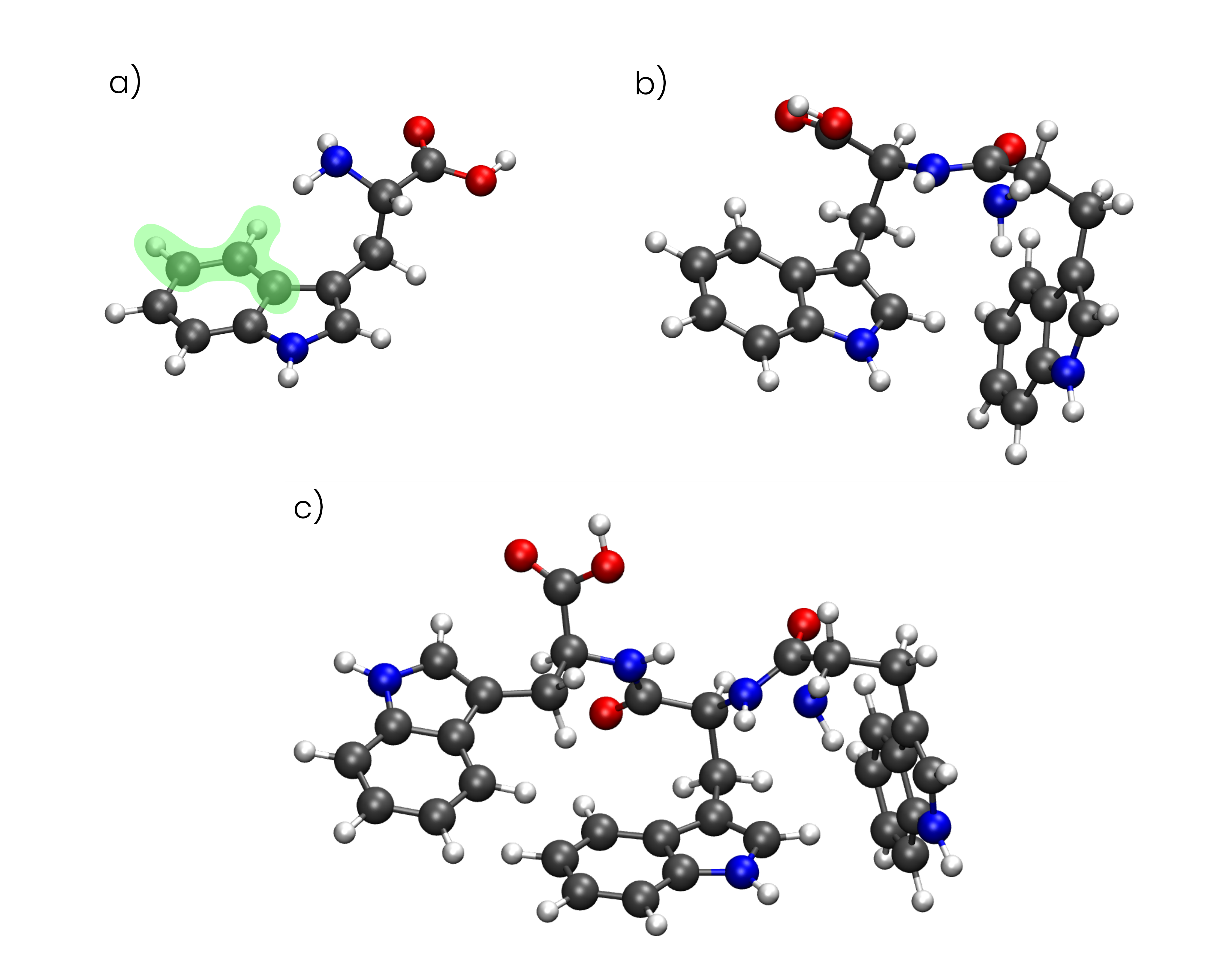}
  \caption{Molecular structure of tryptophan and its derivatives: a) monomer  with the fragment used to compare different embedding approaches indicated in green, b) dimer, and c) trimer. Carbon atoms are indicated in black, hydrogen atoms in white, oxygen atoms in red, and nitrogen atoms in blue.}
  \label{fig:tryptophan}
\end{figure}

Although our results are specific to the examples chosen, we expect their electronic structures to be representative for a more general picture of electronic states. This is due to the regular structure of the amino-acid building blocks of proteins. The results obtained for tryptophan and its oligopeptides will be easily transferable to the homologous and also aromatic amino acid phenyl alanine and may be straightforwardly generalized to the other essential amino acids.

We first applied a density-matrix based embedding. Specifically, we used the bootstrap embedding algorithm for molecular systems \cite{ye2020bootstrap} on a selected five-atom fragment that is part of the aromatic side chain, depicted in \cref{fig:tryptophan}. This fragment was selected as it represents an intuitively challenging embedded system, as it is connected to the environment with both single and delocalized conjugate double bonds. For the bootstrap embedding, a set of orthogonal localized orbitals was constructed using the intrinsic atomic orbitals scheme \cite{Knizia2013}.
This basis was then used to perform the Schmidt decomposition of the HF state and select the set of fragment and entangled bath orbitals. Fragment orbitals selected in this way were localized on atoms, making them a poor choice for the wavefunction representation. Therefore, the selected fragment and bath orbitals were transformed to the eigenvectors of the Fock operator in the fragment-bath space, resulting in canonical orbitals depicted in \cref{fig:orbital_selection_bootstrap} a).
Due to the nature of Schmidt decomposition, these orbitals resemble the canonical HF orbitals of the entire system and, consequently, exhibit relatively small single-orbital entropies (defined as von Neumann entropies of the one-orbital reduced density matrix \cite{legeza2003entropy}). Hence, the multiconfigurational character of the fragment-bath system obtained with bootstrap embedding is comparable to the one of the entire system, even though fragmentation `cuts' through single and double bonds (\cref{fig:orbital_selection_bootstrap} b)). From the drop in the values of the single-orbital entropies one can conclude that there are $12$ orbitals that contribute the most to the multiconfigurational character and the static correlation \cite{stein2016automated,stein2017multiconfigurational_character} and include the reconstructed delocalized aromatic $\pi$ system of the side chain (\cref{fig:orbital_selection_bootstrap}).
Note that in the complete bootstrap embedding, one would apply this construction to multiple fragments, and then perform a matching procedure between the different fragments, a step we ignore for the purpose of this discussion.

\begin{figure}
  \centering
  \includegraphics[width=0.95\linewidth,grid=false]{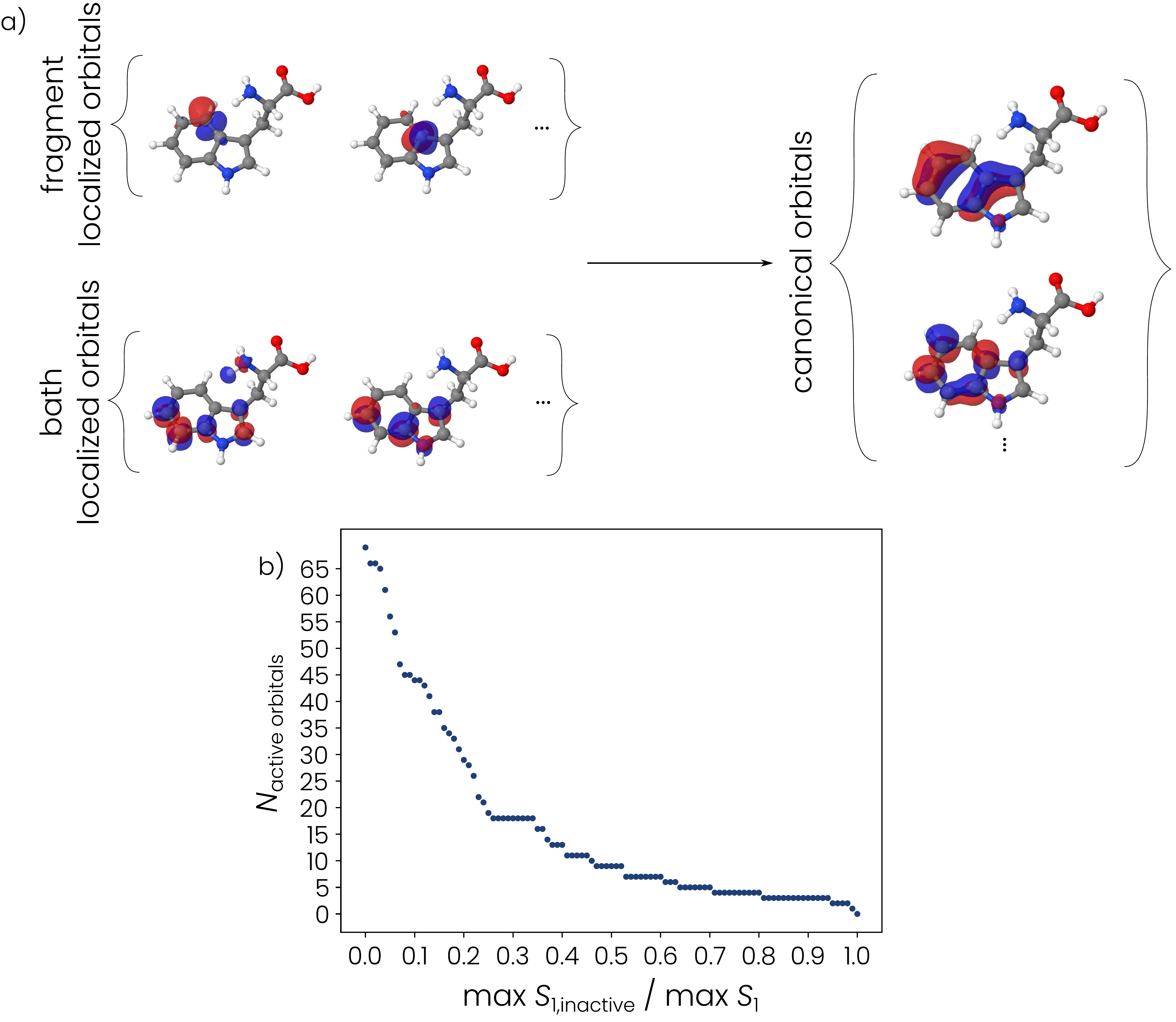}
  \caption{a) Left: fragment and bath orbitals obtained from the Schmidt decomposition of the HF state, Right: canonical orbitals obtained from the HF calculation in the fragment-bath space.
    b) Threshold diagram corresponding to the maximal discarded single-orbital entropies relative to the largest value ($\text{max} \ S_{1}=0.19115$) for different active space sizes, as introduced in Ref.~\cite{stein2016automated}.}
  \label{fig:orbital_selection_bootstrap}
\end{figure}

Next, we performed DMRG calculations
for different active space sizes in the fragment-bath space, with orbitals being selected based on the largest values of the single-orbital entropies, as described in Ref.~\cite{stein2016automated}.
We use a bond dimension of $D=1024$ to yield a reference MPS that can be taken as a reliable approximation to the exact (full configuraion interaction, or FCI) ground state. To assess the suitability of MPS wavefunctions
with smaller bond dimensions for initial state preparation, we calculate both energy differences and overlaps of them with the reference MPS; the results are shown in \cref{fig:bootstrap_overlaps} a) and b).

\begin{figure}
  \centering
  \includegraphics[width=0.95\linewidth,grid=false]{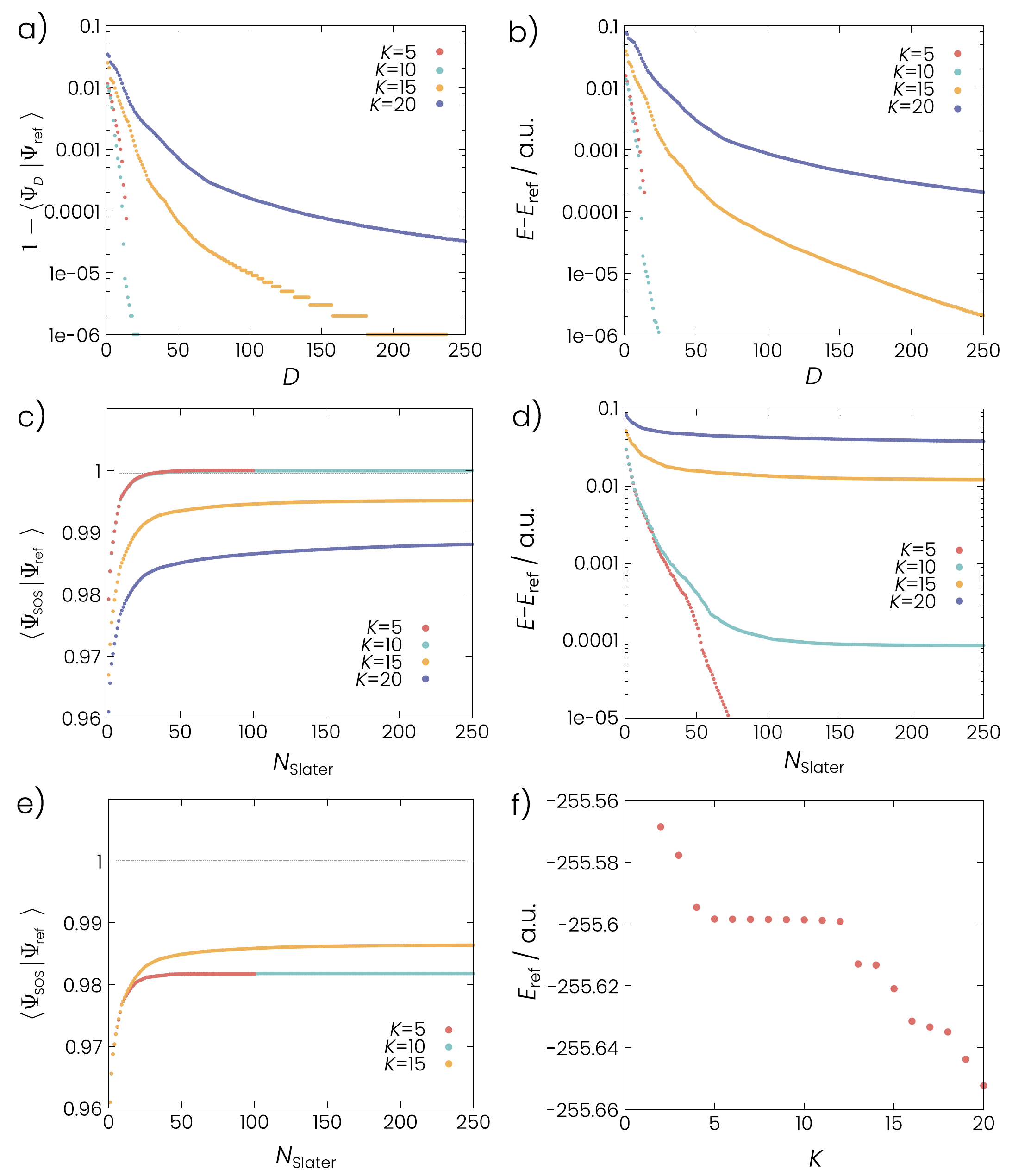}
  \caption{a) Overlap of truncated MPS wavefunctions with the reference MPS of bond dimension $D=1024$ (taken as an approximate FCI result) for different active space sizes $K$. b) Corresponding energy differences of the small-bond-dimension and the reference MPS. c) Overlap of sum-of-Slater states with different numbers of Slater determinants with the reference MPS wavefunction d) Corresponding energy differences of the sum-of-Slater states and reference MPS state e) Overlap of sum-of-Slater states obtained from reference MPS wavefunctions in smaller active orbital spaces with the reference MPS wavefunction of the largest active space considered ($K=20$) f) Reference MPS energies for different active space sizes}
  \label{fig:bootstrap_overlaps}
\end{figure}

From \cref{fig:bootstrap_overlaps} one can see that the overlap of the truncated states with the reference MPS state approaches a value of $1$ relatively quickly. This is expected, as the truncation of the bond dimension is performed in the optimal way using the singular value decomposition, selecting the state of the lower bond dimension that has maximal overlap with the nontruncated one. For similar reasons, the energies of the small-bond-dimension states quickly approach the corresponding reference energy. Furthermore, even states with bond dimension $D=2$ already demonstrate excellent overlap with the reference MPS state ($\approx 0.97$), making such an MPS a good candidate for state preparation.

We also consider a different ansatz, namely sum-of-Slater states (see \cref{sec:orthogonality} and \cite{fomichevInitialStatePreparation2023} for a discussion of such states on quantum computers).
We obtain these by selecting the most important determinants from the FCI-type expansion represented by the reference MPS wavefunction, keeping the $N_{\text{Slater}} = L$ terms with the largest amplitudes:
\begin{align}\label{eq:sos}
  \ket{\Psi_{\text{SOS}}} = \frac{1}{\sqrt{\sum_{i=1}^L \abs{C_i}^2}} \sum_{i=1}^L C_i \ket{\Phi_i},
\end{align}
where $\ket{\Phi_i}$ are Slater determinants in a fixed basis of orbitals, and we have ordered so $\abs{C_1} \geq \abs{C_2} \geq \dots$.
These states are both useful as a guiding state Ansatz (for small $L$), and to understand the correlation structure of the ground state.
From \cref{fig:bootstrap_overlaps} c), it can clearly be seen that the overlap obtained for these sum-of-Slater expansions with the reference MPS wavefunction are again relatively large.
However, convergence of these overlaps with the number of included determinants is much slower than the convergence of small-bond-dimension MPS wavefunctions with increasing bond dimension.
Moreover, convergence is slowed down even further for increasing active space sizes $K$, which may be taken as an indication of the orthogonality catastrophe in the large active space limit. It is thus necessary to weigh both the gate count for the preparation of MPS and sum-of-Slater states and the quality of the overlap that they provide for the specific system of interest.

With respect to the correlation structure of the ground state, it is noteworthy that the change of overlap exhibits two different regimes, with a large jump at a small number of Slater determinants, followed by a slow convergence to unity. The initial jump corresponds to the inclusion of the Slater determinants describing excitations into the virtual orbitals with the largest values of the single-orbital entropies, which carry the largest coefficients in the FCI expansion and account for static correlation. Although overlaps with the reference MPS wavefunction are comparatively large, energies corresponding to the sum-of-Slater states converge slowly to the reference MPS energy (\cref{fig:bootstrap_overlaps} d)). Such behavior is expected, as truncation of a Slater-determinant expansion results in a neglect of a significant portion of the dynamical correlation energy. By contrast, an MPS, even if truncated to very low bond dimension, can represent a large number of Slater determinants, resulting in much better convergence behavior, which makes them a good candidate for guiding states with high ground state overlap \cite{fomichevInitialStatePreparation2023}.

We now investigate the quality of the sum-of-Slater states obtained for smaller active spaces as initial states for the larger active space ($K=20$). As can be seen in \cref{fig:bootstrap_overlaps} e), the sum-of-Slater ansatz prepared in an active space with $K=15$ orbitals provides overlaps that are virtually indistinguishable from the reference. In the case of $K=5$ and $K=10$, the overlap is somewhat lower
when compared to the sum-of-Slater state prepared in the $K=15$ case, but still represents an improvement over the Hartree-Fock state. This is due to the fact that not all highly entangled orbitals are included in these cases (as can be seen from the \cref{fig:orbital_selection_bootstrap} b)), resulting in a lack of several determinants with non-negligible coefficients. However, it can be seen that as long as the most entangled orbitals are included, these active spaces provide sum-of-Slater states that are of the same quality in terms of overlap as the ones obtained from the calculation on the larger active space.

Finally, from the \cref{fig:bootstrap_overlaps} f), it is evident that many orbitals must be considered to reach chemical accuracy in absolute energies, as they carry a significant portion of the correlation energy. The fact that the initial state can be prepared in smaller active spaces therefore significantly reduces the computational overhead for the classical part of the initial state preparation.

\begin{figure}
  \centering
  \includegraphics[width=0.9\linewidth,grid=false]{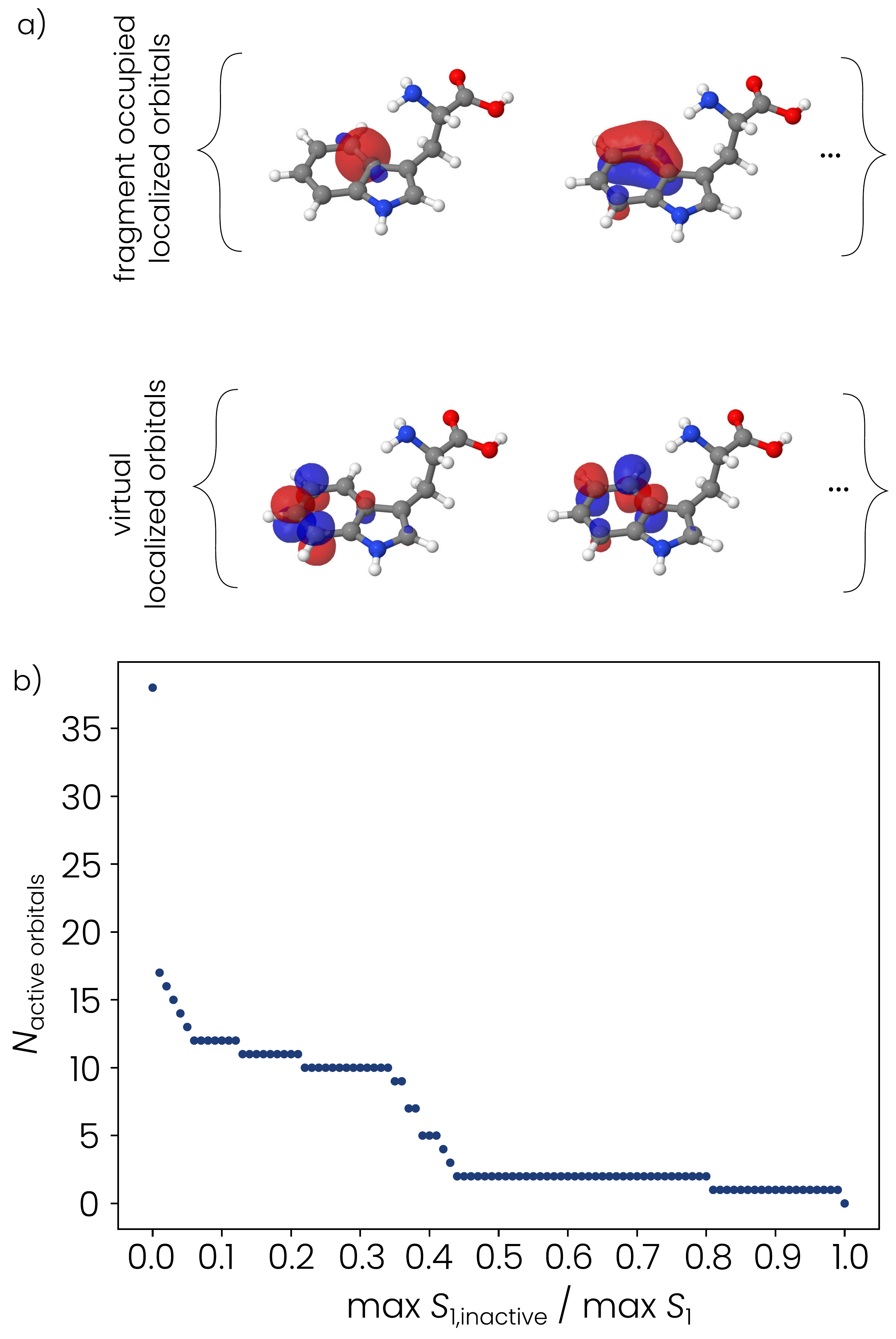}
  \caption{a) Localized occupied fragment orbitals and localized virtual orbitals obtained in Huzinaga embedding b) Threshold diagram
    corresponding to the maximal discarded single-orbital entropies relative to the largest value ($\text{max} \ S_{1}=0.21866$) for different active space sizes, as introduced in Ref.~\cite{stein2016automated}. This may be contrasted with \cref{fig:orbital_selection_bootstrap}.}
  \label{fig:orbital_selection_DFT}
\end{figure}

We also investigated a projection-based Huzinaga embedding \cite{hegely2016exact, Chulhai2018}, using the same five-atom fragment, in order to assess how different embedding strategies affect the state preparation problem. There are several key differences with respect to bootstrap embedding.
Firstly, the orbitals used for the Huzinaga embedding are obtained from a DFT calculation on the entire molecule, while the basis for the density-matrix approach in the bootstrap embedding is a HF calculation.
This results in a different Hilbert space for the fragment problem.
Secondly, in the Huzinaga embedding, interaction of the fragment orbitals with the occupied environment is described with a DFT potential, which approximates the correlation contributions from the occupied environment orbitals that are neglected in the case of bootstrap embedding. This interaction modifies the one-electron part of the fragment Hamiltonian.
Third, a split orbital localization scheme is applied within the Huzinaga embedding, based on intrinsic bond orbitals \cite{Knizia2013,Senjean2021}, while canonical orbitals are used in the bootstrap embedding, which results in a qualitatively different multiconfigurational character of the fragment wavefunction in these cases.
Unlike bootstrap embedding, the entire virtual space is considered in Huzinaga embedding and the canonical orbitals are not recomputed after the embedding. Due to the local nature of the orbitals, the values of the single-orbital entropies are significantly different from the bootstrap embedding case
(\cref{fig:orbital_selection_DFT} and \cref{fig:orbital_selection_bootstrap}). In Huzinaga embedding, orbital entropies are larger when compared to the bootstrap case, which can be attributed to the usage of localized orbitals \cite{weser2021entanglement_local_vs_delocalized}.
This results in a larger number of highly entangled orbitals, while the contribution of the virtuals that are spatially separated from the embedded fragment decreases significantly.

Due to the small entanglement of some virtual orbitals, convergence of the sum-of-Slater states both in terms of overlap and energy is faster than in the bootstrap embedding case (\cref{fig:figure_DFT_overlaps} a) and b)). Hence, localized orbitals might present a more suitable basis for initial state preparation.
In classical quantum chemical methods one aims for a pronounced single-configurational character of the wavefunction and accounts for dynamic correlation a posteriori.
In contrast, for quantum computers, it may be more desirable to have a multiconfigurational wavefunction with little residual dynamic correlation (\cref{fig:figure_DFT_overlaps} d)), since given a guiding state with only constant overlap, phase estimation gives accurate energies on the fragment orbitals.
Similar to our bootstrap embedding results, initial states with large overlap can be prepared in a smaller active space, reducing the cost of a classical calculation for the state preparation step also in the case of Huzinaga embedding (\cref{fig:figure_DFT_overlaps} c)).

\begin{figure}
  \centering
  \includegraphics[width=0.9\linewidth,grid=false]{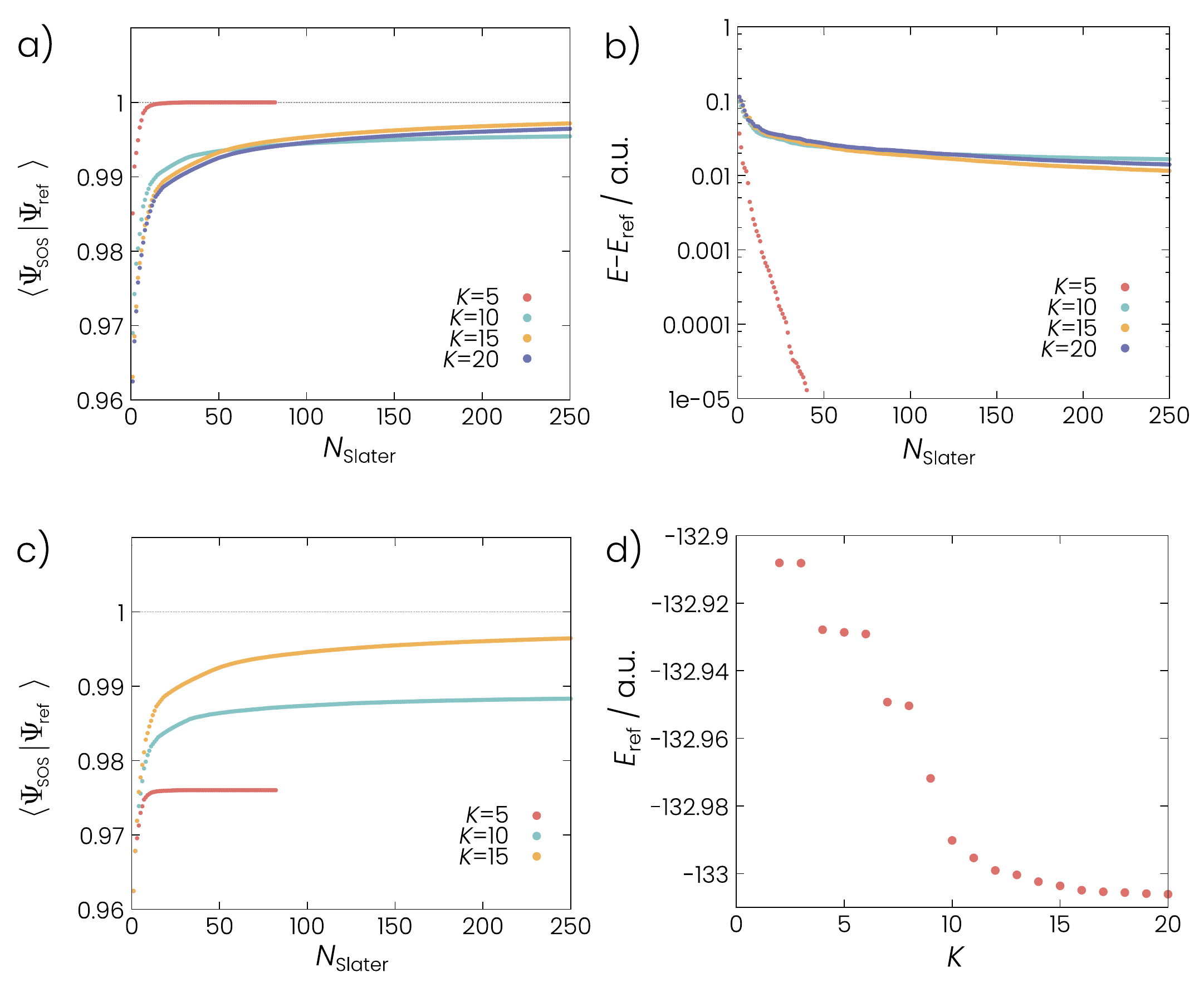}
  \caption{a) Overlap of sum-of-Slater states (with $N_\text{Slater}$ Slater determinants corresponding to the largest coefficients) with the reference MPS state of bond dimension $D=1024$ for different active space sizes $K$. b) Overlap of sum-of-Slater states
    for different active space sizes $K$ with the reference MPS state ($D=1024$) for the largest active space considered $K=20$. c) Energies of sum-of-Slater states, with lines corresponding to the energies of the full MPS wavefunction. d) Energy of the MPS ground state energy approximation, as a function of active space size $K$.}
  \label{fig:figure_DFT_overlaps}
\end{figure}

Finally, we turn to the sequence of tryptophan molecules mimicking an oligopeptide structure. We chose the active spaces to be residing entirely on the side chain (see \cref{sec:numerical_methods} for details). First, we consider the limit of noninteracting side chains corresponding to separated tryptophan molecules in the sequence. In this case, the overlaps of the HF state with the reference MPS states
decay exponentially with the number of tryptophan molecules (\cref{fig:overlaps_polytryptophan} a)).
In the case of sum-of-Slater states, the decay is approximately exponential as well. The exact exponential dependence occurs when the Slater determinants included correspond to the wavefunction that is a product of sum-of-Slater states on each monomer.
Since the monomer ground state has relatively high overlap with the HF state, we see that the overlap of the HF state is still substantial (at around 0.7) for a sequence of 5 tryptophane molecules.

We also compute the overlap with the ground state of the \emph{interacting} tryptophan dimer and trimer.
It can be seen that weak interactions (between the side chains) only marginally perturb the overlap behavior observed for the separately treated side chains, so the exponential decay from the non-interacting case is a good approximation.
We performed embedding calculations for a single side chain in ditryptophan (at the C-terminus side chain) and tritryptophan (side chain at the amino acid in the middle) and investigated the resulting overlaps of the sum-of-Slater states. As can be seen in \cref{fig:overlaps_polytryptophan} b), the behavior of the overlap in both cases of the embedded monomers is essentially the same as in the case of a single tryptophan side chain.
From these observations, we may conclude for proteins that embedding of relevant amino acid side chains represents a good strategy for mitigation of the orthogonality catastrophe caused by large system size.

\begin{figure}
  \centering
  \includegraphics[width=0.95\linewidth,grid=false]{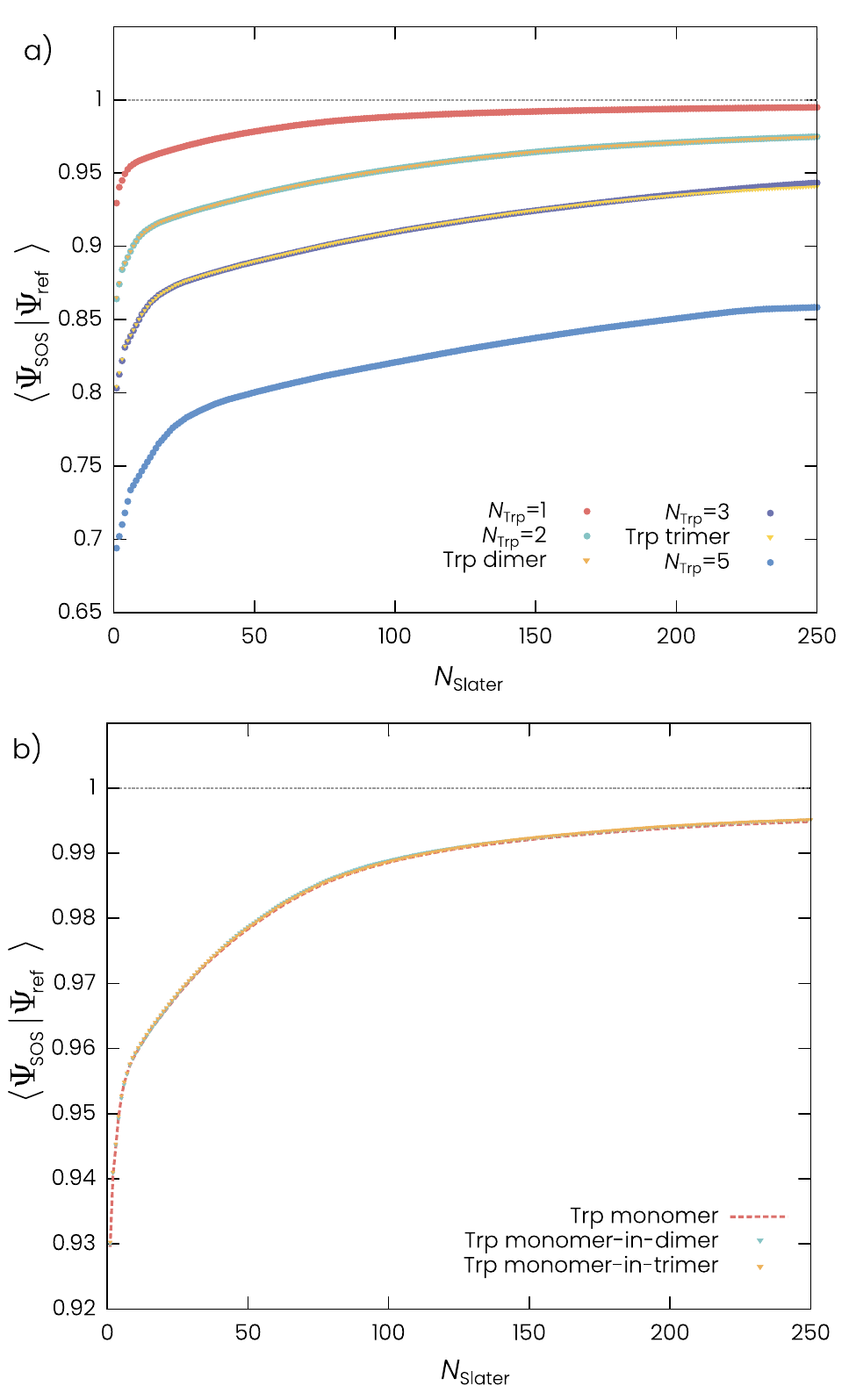}
  \caption{a) Overlap of sum-of-Slater states containing $N_\text{Slater}$ Slater determinants with the reference MPS state of bond dimension $D=1024$
    for different numbers of noninteracting tryptophan residues (circles) and interacting residues in ditryptophan and tritryptophan (triangles, note: the data for the interacting dimer and trimer overlaps with the data for the noninteracting cases) b) Same as a) for a single monomer embedded into the ditryptophan
    and tritryptophan (triangles). Values for the isolated monomer are given as a dashed line.}
  \label{fig:overlaps_polytryptophan}
\end{figure}

\subsection{Ruthenium anti-cancer drug}\label{sec:ruthenium}
As a second example, we consider a system containing elements beyond the second period of the periodic table. This compound (see Fig. \ref{fig:ruthenium} for a ball-and-stick representation of its structure) is an anti-cancer drug \cite{Trondl2014, Peti1999}
which can bind as an inhibitor to a glucose-regulating protein \cite{Macias2011}.
Such a binding is a typical example of small-molecule drug recognition by biomacromolecules and therefore an example for a
molecular recognition application.
Here, we focus on the isolated Ru-based drug molecule and consider its central region, the Ru ion containing moiety, as a quantum core (as highlighted in green in Fig. \ref{fig:ruthenium}).
We leave an investigation of how the ground state overlap problem changes when including the target protein in the embedding to future work.

\begin{figure}
  \centering
  \includegraphics[width=0.8\linewidth,grid=false]{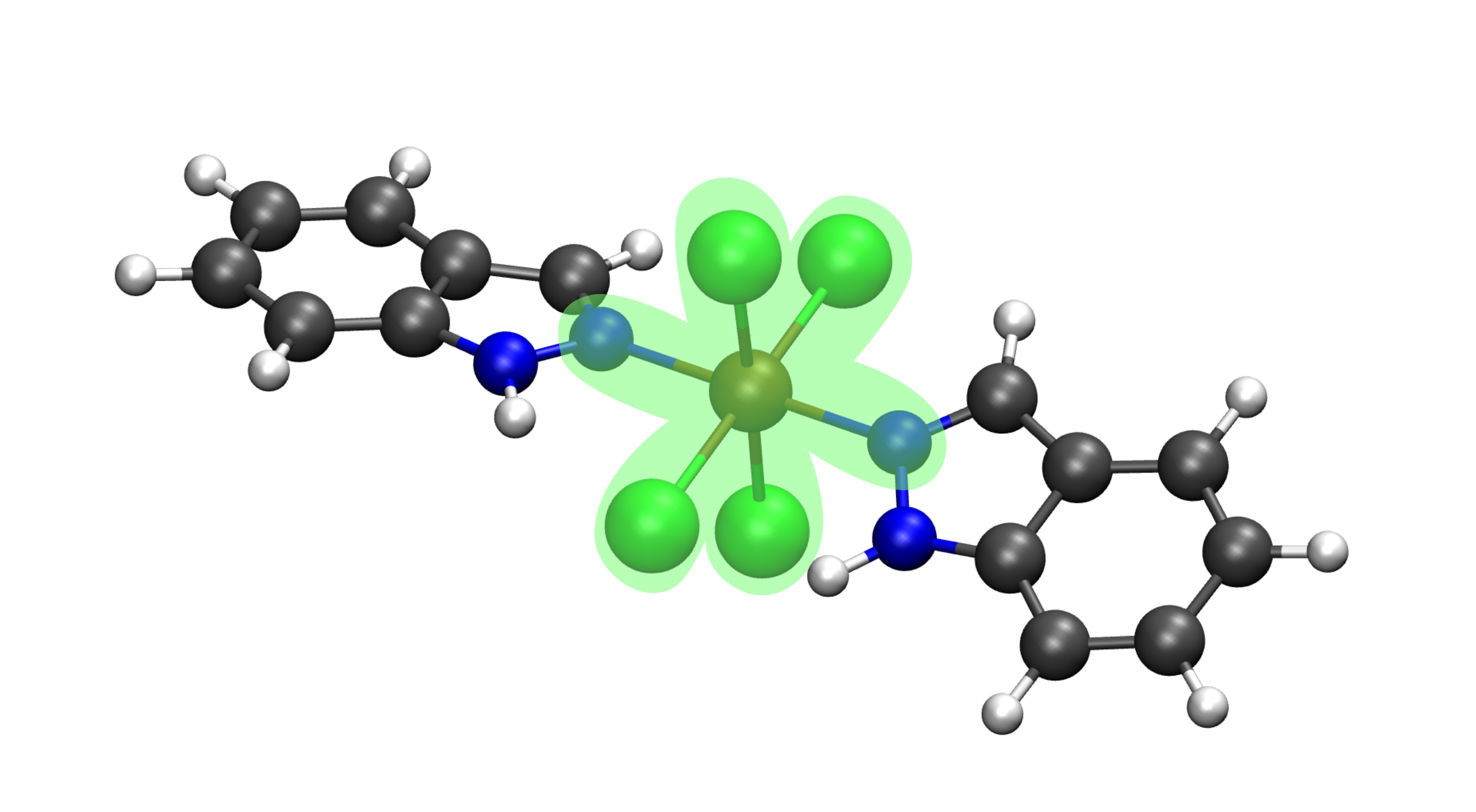}
  \caption{Molecular structure of the ruthenium complex with the fragment used for the Huzinaga embedding indicated in transparent green. Carbon atoms are indicated in black, hydrogen atoms in white, nitrogen atoms in blue, chlorine atoms in green, and the ruthenium atom in brown.}
  \label{fig:ruthenium}
\end{figure}

For this complex, we applied the Huzinaga embedding of the ruthenium ion and its first (nearest-neighbor) coordination sphere (\cref{fig:ruthenium}). We consider two different
charges of this complex: the anion, $q=-1$, corresponds to a doublet state, and the neutral complex, $q=0$, we considered as a triplet ground state. For the doublet, the overlap of the HF state and the sum-of-Slater states is very large, indicating a single-configurational character of the wavefunction. By contrast, the triplet state exhibits smaller overlap of the HF determinant with the reference MPS state of bond dimension $D=1024$,
which is quickly cured by including a second Slater determinant to yield a large overlap, see \cref{fig:overlaps_ruthenium}. This behavior is a consequence of the triplet nature of the wavefunction, which cannot be described by a single spin-restricted determinant. However, it can be described with a configuration state function which is a symmetry-adjusted basis state. In the case of the triplet, the configuration state function is a linear combination of two restricted Slater determinants (and therefor multi-configurational). If the system of interest is comprised of several high-spin regions (e.g., in the case of metal clusters), the overlap of the HF determinant will further decrease,
depending on the number and spins of these regions \cite{leeEvaluatingEvidenceExponential2023}.
It has already been argued that spin coupling to produce configuration state functions effectively solves this problem \cite{marti2024spin}.
However, we note that such situations are not at all common in biomolecular recognition situations.

\begin{figure}
  \centering
  \includegraphics[width=0.95\linewidth,grid=false]{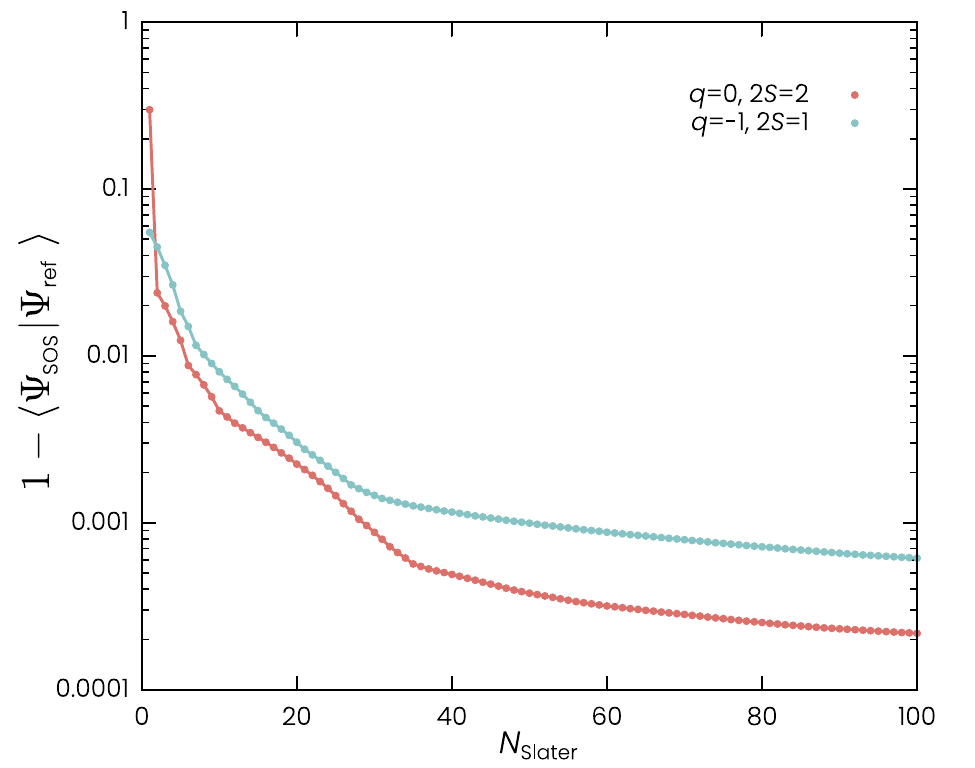}
  \caption{Overlap of sum-of-Slater states (with $N_\text{Slater}$ Slater determinants corresponding to the largest coefficients) with a reference MPS state of
    bond dimension $D=1024$ for the doublet (net charge $q=-1$) and triplet (net charge $q=0$) electronic states of the embedded fragment in the ruthenium complex.}
  \label{fig:overlaps_ruthenium}
\end{figure}

\section{Conclusions}
In this work, we have demonstrated that embedding methods effectively avoid the orthogonality catastrophe in practice for problems with local spatial structures, specifically in biochemistry.
We show that for two conceptually different quantum embedding methods, low complexity states serve as good guiding states for quantum phase estimation.
We demonstrate that this is the case for the fundamental building blocks of biomolecules, but also for other compounds such as small metal-containing drug molecules.
As expected, the ground state overlap of the mean-field HF basis state with the target state decreases with increasing active space size, but based on our estimates, phase estimation will likely face bottlenecks due to the polynomial scaling of simulation methods with the number of orbitals $N$ before small or vanishing ground state overlaps become problematic.
If we consider embedded fragments with a fixed number of electrons and an increasing number of orbitals, in principle FCI on a classical computer scales polynomially as $\mathcal{O}(N^n)$. This means that the speed-up provided by quantum phase estimation is polynomial, as the Hamiltonian simulation subroutine has polynomial complexity in $N$, with $\mathcal{O}(N^4)$ scaling even without any truncation and factorization strategies. However, for a modest number of electrons, on the order of $\sim 20$ for exact solvers and on the order of up to $\sim 100$ for approximate solvers such as DMRG, FCI becomes unfeasible as a classical method. Therefore, even though the quantum advantage of phase estimation algorithms in this case is in principle only polynomial, it is nevertheless significant. Additionally, with increasing quantum resources we can also increase the number of electrons in the fragment in order to reduce the edge effects of the embedding. This will restore the exponential scaling size of the Hilbert space.

For general applicability, we have also studied embedding and guiding state preparation for impurity models. Here we argued for the existence of accurate small active orbital spaces and for mild scaling of ground state overlap. This is clearly a positive prospect for quantum algorithms for ground state energy estimation, especially for biochemical applications. Additionally, the quantum description of a small reactive region in a macromolecular structure will always be sufficient for the investigation of relevant chemical processes as these can be considered to occur locally in a molecular structure (bond-breaking/forming involves a limited number of atoms). Therefore, focusing on the embedded region only is expected to be sufficient for such cases. If the quantum description of a larger region is required, this will also be possible. For instance, bootstrap embedding can be utilized to reconstruct the total energy from calculations on many smaller, overlapping embedded fragments \cite{weisburn2024embedding} or stitch local energies for (small) embedded fragments together with a regression model or a machine learning potential.
We emphasize that these procedures have already been established within traditional computing and the insight is that they alleviate the orthogonality catastrophe and allow for efficient state preparation required for quantum phase estimation to obtain energies with controlled accuracy eventually.

To provide a more general framework for formal analysis, here we considered also embedding and guiding state preparation for impurity models. In this context, we could argue in favor of the existence of accurate small active orbital spaces and for mild scaling of ground state overlap. We further demonstrated that orbital selection based on the quantum information theory and single-orbital entropies provides such a small active space in which state preparation can be performed. An interesting open question is whether the quantum impurity problem without any further restrictions is in BQP.
The mathematical analysis of embedding methods and of quantum impurity problems is based on the natural occupation numbers of the 1-RDM. Future work could explore whether the structure of 1-RDMs, which satisfy additional linear constraints \cite{altunbulak2008pauli,schilling2018generalized}, can be used to improve quantum embedding schemes \cite{theophilou2015generalized,schilling2020implications,faulstich2022pure} and help to find good guiding states for phase estimation.

Various interesting challenges remain to assess the utility of fault-tolerant quantum computers for ground state energy estimation of electronic structure in chemistry.
As we have emphasized, one of the main advantages of quantum computing based on phase estimation over conventional methods is its accuracy guarantees. However, embedding methods introduce an (uncontrolled) error, even if small in practice. A better understanding of this error and the effect of the approximate treatment of the environment on the quantum core will be important for applications of quantum computers to macromolecules, and it requires further work.

\section*{Acknowledgements}
This work is part of the Research Project ``Molecular Recognition from Quantum Computing''. Work on this project is supported by Wellcome Leap as part of the Quantum for Bio (Q4Bio) Program.
We also acknowledge financial support from the Novo Nordisk Foundation (Grant No. NNF20OC0059939 ‘Quantum for Life’).

\appendix

\section{Guiding states in quantum chemistry}\label{sec:orthogonality}
This appendix contains a review of the guiding state problem for quantum chemistry.
We start by briefly reviewing the dependence of quantum phase estimation on the ground state overlap of the guiding state.
Next we give an overview of different ansatzes for guiding states.
We discuss different types of correlation in quantum chemistry, and why one expects that the ground state overlap vanishes exponentially for increasing system size (the orthogonality catastrophe).

\subsection{Quantum circuits and guiding state preparation}

To estimate the ground state energy, one would like to apply quantum phase estimation with the time-evolution operator $U(t) = \exp(itH)$.
Then, using a guiding state $\ket{\psi}$ with ground state overlap $\abs{\braket{\psi}{\psi_0}} \geq \overlap$, one finds an approximation to the ground state energy with probability $\overlap^{2}$.
Therefore, the ground state overlap is related to the number of repetitions needed to find the ground state energy through phase estimation.
An approximation of accuracy $\eps$ requires a circuit with time evolution for time $\bigO(\eps^{-1})$, leading to a total cost of $\tilde \bigO(\eps^{-1} \overlap^{-2})$.
The scaling with $\eps^{-1}$ is known as Heisenberg scaling and is optimal \cite{giovannetti2006quantum}, but different approaches to phase estimation have different scalings with the ground state overlap.
Given access to a quantum circuit preparing $\ket{\psi}$, one can find the ground state energy using $\bigO(\overlap^{-1})$ uses of this state preparation unitary, and time evolution for time $\tilde\bigO(\eps^{-1} \overlap^{-1})$.
This scaling is optimal \cite{mande2023tight} but does, however, lead to deeper circuits.
On the other hand, there are also approaches that only require a single ancilla qubit and time evolution for time $\tilde \bigO(\eps^{-1})$ \cite{lin2022heisenberg}; these require $\tilde\bigO(\overlap^{-4})$ circuit repetitions.
Finally, in the regime where $\overlap \to 1$ it is possible to reduce the maximal circuit depth, at the cost of an increased number of repetitions.
If $\delta = 1 - \overlap^2$, the maximal required time evolution can be reduced to $\bigO(\delta \eps^{-1})$, with a total evolution time of $\bigO(\delta \eps^{-2})$ \cite{ding2023even,ni2023low}.
This means that if one can find guiding states with ground state overlap surpassing $\overlap \geq 0.9$ using cheap conventional methods, this may be used to signifificantly reduce the required circuit depth for accurate energy estimates using quantum phase estimation, which may be helpful for realizing such simulations on devices with a limited number of qubits and maximal circuit depth.

There are three standard classes of guiding states for chemistry problems, which we will now briefly review, together with the cost of preparing them on a quantum computer. We refer to \cite{fomichevInitialStatePreparation2023} for an extensive overview.
For convenience we assume we are using a Jordan-Wigner mapping for the fermion-to-qubit mapping of the electronic structure problem.
We quote the best known gate counts for two measures: the number of two-qubit gates required (and ignoring the number of single-qubit gates); or alternatively the number of Toffoli gates (and ignoring the number of Clifford gates). The first measure is relevant for current and near-term devices where two-qubit gates are typically much slower and more noisy than single-qubit gates. The second measure applies to a fault-tolerant model where the main cost comes from the non-transversal (non-Clifford) operations.

\begin{enumerate}
  \item \textbf{Single Slater determinant:} This is the most basic case. Hartree-Fock yields a single Slater determinant, and under a fermion-to-qubit mapping this state is represented by a product state (which requires no two-qubit gates or $T$ gates to prepare). If we use the standard Jordan-Wigner mapping with canonical orbitals (compatible with the HF state), then for $N$ spin-orbitals and $n$ electrons this is simply the state $\ket{1}^{n} \ket{0}^{N-n}$.
        It is also possible to choose a single Slater determinant in a different orbital basis than the one used for the fermion-to-qubit mapping, which can significantly improve the ground state overlap \cite{tubmanPostponingOrthogonalityCatastrophe2018,ollitrault2024enhancing}.
        A single-particle basis change can be implemented as a quantum circuit using so-called Givens rotations. Using the Jordan-Wigner representation, this requires $n(N-n)$ two-qubit gates \cite{kivlichan2018quantum}.
  \item \textbf{Sum-of-Slater determinants:} A direct extension is to take a state that is a superposition of a (small) number $L$ of Slater determinants, as in \cref{eq:sos}. Such states can for example be found through configuration interaction with single and double excitations (CISD) methods or selective configuration interaction (SCI) methods \cite{fomichevInitialStatePreparation2023}. Alternatively, they can be extracted as the dominant coefficients of a matrix product state (see below).
        Again, if we assume that the fermion-to-qubit mapping uses the same orbital basis, such states are mapped to a superposition of $L$ standard basis states. These can be prepared using at most $\bigO(NL)$ two-qubit gates, see \cite{malvetti2021quantum} for explicit gate counts; or $\bigO(L \log(L))$ Toffoli gates and $\big(\log(L))$ ancilla qubits \cite{fomichevInitialStatePreparation2023}.
        Additionally, if the number of excitations from the HF state is bounded by $k$, $\bigO(Lk)$ two-qubit gates suffice \cite{deveras2022double}.
  \item \textbf{Matrix product states:} A final power class of states are matrix product states (MPS). These can be found using the DMRG algorithm, and they are characterized by their bond dimension $D$.
        Accurate quantum chemistry calculations typically require a large bond dimension, but alternatively one can either try to minimize over low bond dimension MPS, or truncate a high bond dimension MPS to a low bond dimension state which has less accurate energy but still significant ground state energy overlap. In many cases, this leads to high-quality guiding states \cite{fomichevInitialStatePreparation2023}, but at a relatively high classical processing cost.
        MPS can be prepared in sequential fashion, using $\bigO(ND^2)$ gates (either two-qubit or Toffoli) \cite{schon2005sequential} and depth scaling with $N$.
        Heuristic methods for short depth circuits preparing states with high overlap with a given MPS can be found in \cite{ran2020encoding,rudolph2023decomposition}.
        Other depth reduction techniques are based on the correlation length of the MPS \cite{malz2024preparation}, or by using mid-circuit measurements and adaptive circuits \cite{smith2024constant}.
\end{enumerate}

Other methods for finding and preparing guiding states, not discussed in this work, include adiabatic state preparation \cite{veis2014adiabatic,albash2018adiabatic}, thermal state preparation algorithms \cite{chen2023quantum} and variational approaches based on unitary coupled cluster methods \cite{lee2018generalized}.

\subsection{The orthogonality catastrophe}\label{sec:orthogonality_catastrophe}
In most cases in many-body physics, requiring approximations at the level of the wavefunction that have large overlap with the true ground state is a very strong requirement. As we will explain below in more detail, small (local) errors in the approximation lead to essentially orthogonal states on the many-body level, a phenomenon known as the orthogonality catastrophe.
Nevertheless, for quantum algorithms it is in fact important to prepare guiding states that have a significant overlap with the true many-body ground state in order to guarantee that the quantum phase estimation algorithm will give a good estimate of the ground state energy.

We collect four basic arguments for the orthogonality catastrophe, based on \emph{accumulation of error} and \emph{Andersons impurity argument} in the thermodynamic limit, the \emph{electron-electron cusp} in the continuum limit, and finally an argument from the \emph{computational complexity of the ground state problem}.

The most basic intuition for the orthogonality catastrophe is that if we have an extended system consisting of $N$ subsystems without any correlation between the systems, the global ground state overlap decays exponentially if there is a local approximation error.
If the true state is $\ket{\phi}^{\ot n}$, and we have on-site estimates $\ket{\psi}$ of $\ket{\phi}$, then the global squared overlap decays exponentially as $\abs{\braket{\phi}{\psi}}^{2N}$ (and the same is true for uncorrelated fermionic systems) \cite{kohn1999nobel}.
While this gives a reasonable intuition, it is also a very artificial scenario.

In \cite{anderson1967infrared} Anderson showed that a similar phenomenon occurs for a realistic physical system of an electron gas with an impurity. The impurity is a local sized perturbation, which is constant-sized compared to the number of electrons. The ground state of the perturbed system has an overlap with the ground state of the unperturbed system which, in this case, decays \emph{polynomially} with the system size. Note that this differs from the uncorrelated example, where the decay arises from the fact that \emph{every} local fragment incurs an error.
This phenomenon is known as the Anderson orthogonality catastrophe.
This can be derived in perturbation theory for a one-body perturbation \cite{anderson1967infrared}.
This means that for impurity models, the orthogonality `catastrophe' may only be of a polynomial nature and therefore need not be an obstruction to an efficient quantum algorithm as one may hope that the ground state has a polynomially decaying overlap with the mean-field state.
We partially make this idea rigorous in \cref{sec:impurity} and \cref{sec:proof impurity}, based on the work of \cite{bravyi2017complexity}.

Another fundamental reason for the orthogonality catastrophe lies in the continuum limit rather than the thermodynamic limit.
Here we keep the number of electrons $n$ fixed, but we increase the spatial resolution of the second-quantized Hamiltonian by raising the number of orbitals $N$. This leads to convergence to the true wavefunction in $L^2(\RR^{3n}) \ot \CC^2$. Note that the Hilbert space of a fixed number of electrons $n$ but increasing $N$ scales polynomially (but with an exponent scaling with $n$).
It is well known that resolving the electron-electron cusp of the wavefunction of any system with Coulomb interactions requires a diverging number of Slater determinants (see for example the discussion in Chapter 7 of \cite{helgakerMolecularElectronicStructureTheory2014}).
This is well understood analytically \cite{sobolev2022eigenvalue,hearnshaw2022analyticity,hearnshaw2023diagonal}, and is true for arbitrary eigenfunctions of the electronic Hamiltonian.
In particular, it is known that the 1-RDM in the continuum limit has eigenvalues (i.e. natural occupation numbers) decaying as $\lambda_k \sim k^{-\frac{8}{3}}$ as $k$ goes to infinity \cite{sobolev2022eigenvalue}, lower bounding the required number of (natural) orbitals needed to approximate the exact wavefunction.

A final argument for a version of the orthogonality catastrophe comes from complexity theory.
The problem of computing ground state energies is known to be QMA-complete, which means that it is strongly believed to be hard for quantum computers.
This remains true in the case of electronic Hamiltonians \cite{schuch2009computational,ogorman2022intractability}. As a result, there should be no efficient (quantum) algorithm for finding states with at most polynomially decaying overlap with the true ground state: if such an algorithm existed, it could be used to prepare an initial state for the quantum phase estimation algorithm and thus efficiently estimate the ground state energy.
On the other hand, it is known that the ground state estimation problem remains BQP-hard even if one is provided a guiding state with very high ground state overlap \cite{cade2023improved}.
This means that in general, having classical methods which achieve reasonable overlap does not imply the ability to improve this precision arbitrarily in an efficient manner (although this could still be the case in practice for problems in chemistry, as suggested by \cite{leeEvaluatingEvidenceExponential2023}).

Finally, for the general class of Hamiltonians that have an (unknown) efficiently describable guiding state whose expectation values can be computed efficiently using a classical computer (such as MPS), the ground state energy problem is QCMA complete \cite{weggemans2023guidable}. This provides evidence that in general guiding states that can efficiently be described classically may still be hard to find.

\subsection{Guiding states in chemistry}
Electronic structure theory is the main quantum mechanical problem in chemical and materials science.
We distinguish three regimes:
\begin{enumerate}
  \item \textbf{The weak correlation limit:} In the weak correlation limit the mean-field Hartree-Fock state is already an excellent qualitative approximation. As discussed, these are easy to prepare on a quantum computer, and this suffices as the choice of guiding state.
        Most chemical processes belong to this class. However, for these systems coupled cluster models based on a Hartree-Fock reference state deliver reliable results because they can efficiently account for the lacking dynamic correlation (yet, with unknown system-focused error for a specific application \cite{reiher2022molecule}). Small molecules \cite{tubmanPostponingOrthogonalityCatastrophe2018} belong to this class, as well as electronic host-guest binding energy calculations such as those required in drug design problems.
        Quantum computers potentially exhibit an advantage in this regime for larger numbers of orbitals $N$.
        Then, quantum phase estimation can, in principle, obtain an energy of guaranteed accuracy that can be taken as a reference for standard coupled cluster models.
  \item \textbf{The intermediate correlation limit:} Here, static electron correlation can become important and multi-configurational approaches are more suitable. In this regime, traditional computation begins to face severe problems.
        Only coupled cluster models that can deal with a multi-configurational reference or that are of high excitation degree (including at least quadruple excitations) are applicable.
        However, the former are not unambiguously defined, whereas the latter are too costly for all but the smallest molecules.
        At the same time, generic multi-configurational models such as the complete active space self-consistent field (CASSCF) wavefunction and the MPS wavefunction optimized by DMRG are natural choices, but suffer from a lack of similarly accurate dynamical correlation methods to account for the fact that CASSCF and DMRG approaches are restricted to a few dozens of orbitals only.
        Nevertheless, only a small number of determinants (say, on the order of a dozen) will represent the state qualitatively well. Their superposition can be initialized efficiently as a guiding state on a quantum computer (provided that knowledge about these determinants can be obtained at comparatively little cost before a quantum computation).
        \item\label{it:strong correlation} \textbf{The strong correlation limit:} A large number of Slater determinants will be required in order to achieve a sufficiently high overlap with the target state.
        There are only a few examples known in ground state chemistry of this kind.
        A prominent class of examples are iron-sulfur clusters, where the overlap of the optimal Slater determinant with the ground state becomes very small \cite{leeEvaluatingEvidenceExponential2023}, while DMRG optimized MPS may still have large overlap. Such cases can be more routinely found in materials science, where materials are built from  many units with half-filled single-particle states, or in electronically excited states.
\end{enumerate}

Generally, to assess the \emph{quantum advantage} of quantum phase estimation, a better understanding of which systems have polynomially scaling conventional methods in practice on the one hand, and an understanding of which systems allow for good guiding states on the other, is required \cite{leeEvaluatingEvidenceExponential2023,fomichevInitialStatePreparation2023}.

\section{Fermionic formalism}\label{sec:setup}

We start by defining notation and recalling the formalism of fermionic quantum systems.
We consider a single-particle space $\cH$ of dimension $N$ which we may identify with $\CC^N$.
We refer to elements of the single-particle space $\cH$ as \emph{modes}; or equivalently in chemistry terminology as \emph{orbitals}.
The full Hilbert space is the Fock space
\begin{align*}
  \bigoplus_{n=0}^N \cH^{\wedge n} \cong (\CC^2)^{\ot N},
\end{align*}
which can be mapped to $N$ qubits.
Numbering an orthonormal basis of modes $j=1,2,\dots,N$, we associate creation and annihilation operators $a_j$ and $a_j^\dagger$, respectively, which satisfy the fermionic anticommutation relations
$$
  \{a_j,a_k\} = 0 = \{a_j^\dagger,a_k^\dagger\}\ ,\quad \{a_j,a_k^\dagger\} = \delta_{jk}\ .
$$
We define the vacuum state $\ket{\Omega}$ to be the unique mutual kernel of all the operators $a_j$, from which the creation operators $a_j^\dagger$ generate the full Hilbert space as a Fock space.
Given any normalized vector $x = (x_1,\dots,x_N) \in \CC^N$ we let
\begin{align*}
  a^\dagger(x) = \sum_{j=1}^N x_j a_j^\dagger, \qquad a(x) = \sum_{j=1}^N \overline{x_j} a_j
\end{align*}
be the operators that create and annihilate the mode $x$.

The operators
\begin{align*}
  \hat{n}(x) = a^\dagger(x)a(x), \qquad \hat{n} = \sum_{j=1}^N a_j^\dagger a_j
\end{align*}
are the number operator for mode $x$ and the total number operator.
If $\ket{\Psi}$ is an eigenvector of $\hat{n}(x)$ with eigenvalue 0 or 1 we say that mode $x$ is respectively unoccupied or occupied in the state $\ket{\Psi}$.
Given a state $\ket{\Psi}$, its \emph{covariance matrix} or \emph{one-body reduced density matrix} (1-RDM) is an operator on $\cH$ defined by
\begin{align*}
  \langle x, \covmat y \rangle =  \bra{\Psi} a^\dagger(x) a(y) \ket{\Psi}.
\end{align*}
After a choice of basis this gives an $N \times N$ matrix with entries $\covmat_{jk} = \bra{\Psi} a_j^\dagger a_k \ket{\Psi}$.

A \emph{Slater determinant} is a state of the form
\begin{align*}
  a^\dagger(x_n) \dots a^\dagger(x_1) \ket{\Omega},
\end{align*}
where $x_1, \dots, x_n$ are a collection of orthonormal modes in $\cH$.
Up to a phase, the Slater determinant only depends on the subspace $\mathcal X$ spanned by the $x_1, \dots, x_n$, and it is the state where the modes in $\mathcal X$ are occupied, and the modes in $\mathcal X^{\perp}$ are unoccupied.

We can also define the $2N$ Majorana operators
\begin{align*}
  c_{2j-1} = a_j + a_j^\dagger, \qquad c_{2j} = -i(a_j - a_j^\dagger).
\end{align*}
A \emph{Gaussian unitary} is a unitary $U$ acting on the Fock space such that
as an orthogonal transformation $O \in O(2N)$,
\begin{align*}
  U c_p U^\dagger = \sum_q O_{pq} c_q\
\end{align*}
for an orthogonal transformation $O \in O(2N)$.
A \emph{Gaussian state} is a state of the form $U\ket{\Omega}$, where $U$ is a Gaussian unitary operator. In particular, any Slater determinant is a Gaussian state.

A \emph{non-interacting} (or \emph{free}, or \emph{one-body}) Hamiltonian is a Hamiltonian of the form
\begin{align*}
  H_{\free} = \sum_{j,k} h_{jk} a_j^\dagger a_k,
\end{align*}
where $h$ is a Hermitian operator on $\cH$.
By choosing a basis for $\cH$ in which $h$ is diagonal, we can always transform this to the form
\begin{align*}
  H_{\free} = \sum_{j=1}^N \epsilon_j \tilde{a}_j^\dagger \tilde{a}_j
\end{align*}
with $\epsilon_1 \leq \epsilon_2 \leq \dots$.
The ground state space of a non-interacting Hamiltonian is spanned by a set of Slater determinants.
If $\epsilon_j < 0$ for $j \leq n$, and $\epsilon_j > 0$ for $j > n + k$, then the ground state space is spanned by the set of Slater determinants for which modes $1, \dots, n$ are occupied and $n + k, \dots, N$ are unoccupied.
In particular, if $\epsilon_j \neq 0$ for all $j$ then the ground state is unique.

In this work we are concerned with quantum impurity Hamiltonians, where a free Hamiltonian is perturbed by a non-negligible but spatially localized impurity term $H_{\imp}$. The term $H_{\imp}$ is localized in the sense that, written in terms of the fermionic operators $a_j$, $a_j^\dagger$, it only contains the modes $j\leq M$ for some constant $M$.

\begin{dfn}
  A \emph{quantum impurity Hamiltonian} is a Hamiltonian of the form
  \begin{align*}
    H = H_{\free} + H_{\imp},
  \end{align*}
  where $H_{\free}$ is a non-interacting Hamiltonian on $N$ modes, and $H_{\imp}$ is an interacting Hamiltonian on a subset of $M$ of the modes. 
  The \emph{quantum impurity problem} is the problem of computing the ground state energy of $H = H_{\free} + H_{\imp}$ to accuracy $\precision$, where $H_{\free}$ has bounded single-particle energies $\epsilon_j$.
\end{dfn}

In this computational problem, note that $H_{\imp}$ and $M$ are taken to be constant parameters: we seek an efficient solution in terms of the total system size $N$ and the precision $\precision$.

There are a few ways in which this problem can be simplified. Firstly, note that we can assume without loss of generality that the single-particle energies are non-negative. This follows by defining a new set of fermionic operators $b_j$ by
\begin{align*}
  b_j = \tilde{a}_j^\dagger \qquad & \text{ for } j = 1,\dots,n      \\
  b_j = \tilde{a}_j \qquad         & \text{ for } j = n+1,\dots,N\ ,
\end{align*}
where $n$ is maximal such that $\epsilon_n < 0$, as above. More generally, this corresponds to the transformation
\begin{align*}
  b(x) = \sum_{j=1}^n x_j a_j^\dagger + \sum_{j=n+1}^N \overline{x_j} a_j\ .
\end{align*}
Under this transformation, the free Hamiltonian takes the form (after a constant energy shift)
\begin{align*}
  H_{\free} = \sum_{j=1}^N \abs{\epsilon_j} b_j^\dagger b_j\ ,
\end{align*}
and the state vacuum state with respect to the $b_j$, $\ket{\Theta}$, is a ground state. Note that $\ket{\Theta}$ is a Slater determinant with respect to the $\tilde{a}_j$, via
\begin{align*}
  \ket{\Theta} = \prod_{j=1}^n \tilde{a}_j^\dagger \ket{\Omega}\ .
\end{align*}
For convenience, we can normalise the energy scale to assume without loss of generality that $\epsilon_j \in [0,1]$ for all $j$.

We denote by $\omega$ the energy gap of $H_{\free}$, which is equal to the smallest nonzero $\abs{\epsilon_j}$,
\begin{align*}
  \omega = \min_{j : \epsilon_j \neq 0} \abs{\epsilon_j}\ .
\end{align*}
In fact, for the purposes of estimating the ground state energy of $H$ to accuracy $\precision$, it turns out that we may assume $\omega > \precision / m = \Omega(\precision)$. This fact follows by truncating the low-energy modes of $H_{\free}$ and it is proved as Lemma 5 of \cite{bravyi2017complexity}.

In our analysis of the quantum impurity problem we will make particular use of the covariance matrix with respect to the ground state $\ket{\Psi}$ of $H$, whose entries are given by
\begin{align*}
  \covmat_{jk} = \bra{\Psi} b_j^\dagger b_k \ket{\Psi} \, .
\end{align*}

\section{Covariance matrix analysis for quantum impurity models}\label{sec:proof impurity}
In this section, $H = H_{\free} + H_{\imp}$ is a quantum impurity Hamiltonian on $N$ modes, with an impurity on $M$ modes and we use the notation of \cref{sec:impurity}. We assume that $H_{\free}$ has the form
\begin{align*}
  H_{\free} = \sum_{j=1}^N \epsilon_j a_j^\dagger a_j\ ,
\end{align*}
where $\epsilon_j \in [-1,1]$ for all $j$. As in \cref{sec:setup}, we may write
\begin{align*}
  H_{\free} = \sum_{j=1}^N |\epsilon_j| b_j^\dagger b_j
\end{align*}
for some new fermionic operators $b_j$. In this section we denote by $\covmat$ and $\covmat'$ the 1-RDMs of the ground state $\ket{\Psi}$ with respect to the $b_j$ and $a_j$ modes respectively, that is,
\begin{align*}
  \covmat_{jk} = \bra{\Psi} b_j^\dagger b_k \ket{\Psi} \qquad \covmat'_{jk} = \bra{\Psi} a_j^\dagger a_k \ket{\Psi}\ .
\end{align*}

In \cite{bravyi2017complexity} it is shown that for the ground state $\ket{\Psi}$ of $H$, the covariance matrix $\covmat$ has exponentially decaying eigenvalues.
To be precise, Theorem 2 of \cite{bravyi2017complexity} shows that there exists a constant $c$ such that there exists a ground state $\ket{\Psi}$ such that the eigenvalues $\sigma_1 \geq \sigma_2 \geq \dots$ of $\covmat$ are bounded as
\begin{align}\label{eq:exp decay 1rdm}
  \sigma_j \leq c\exp\mleft(- \frac{j}{14M\log(2\gap^{-1})}\mright) \, .
\end{align}
This bound can be used to show that $\ket{\Psi}$ can be approximated by a state that is a superposition over a limited number of Gaussian states. We revisit the argument of \cite{bravyi2017complexity} and show that (with respect to the original choice of fermionic operators $a_j$) we can also obtain a ground state approximation using Slater determinants. Note that the class of Gaussian quantum states is strictly larger than the class of Slater determinants; for certain models ground state approximations using Gaussian states can be much better than those by Slater determinants \cite{bravyi2019approximation}.

The key computational step is summarised in the following lemma. Informally, this guarantees that by assuming that modes with eigenvalues close to 1 in the 1-RDM are filled, and assuming that those with eigenvalues close to 0 are unfilled, one can obtain a reasonable approximation to the true state.

\begin{lem}\label{lem:excitation projection}
  Let $\covmat$ be the 1-RDM of a state $\ket{\Psi}$, and let $x_1,x_2,\dots,x_N$ be an orthonormal basis for $\CC^N$. Given disjoint subsets $I^+,I^- \subseteq [n]$, define the projectors
  \begin{align*}
    \Pi^- = \prod_{j \in I^-} a(x_j)^\dagger a(x_j)\qquad \Pi^+ = \prod_{j \in I^+} a(x_j) a(x_j)^\dagger\ .
  \end{align*}
  Then there exists a state $\ket{\tilde\Psi}$ in the image of both $\Pi^-$ and $\Pi^+$ such that
  \begin{align*}
    |\braket{\Psi}{\tilde\Psi} | \geq 1 - \delta\ ,
  \end{align*}
  where
  \begin{align*}
    \delta \leq \sum_{j \in I^-} \sqrt{1-\covmat_{jj}} + \sum_{j \in I^+} \sqrt{\covmat_{jj}} \ ,
  \end{align*}
  where $\covmat_{jk}$ are the elements of $\covmat$ with respect to the basis $\{x_j\}$.
\end{lem}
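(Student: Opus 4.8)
The plan is to take the obvious candidate, namely $\ket{\tilde\Psi} = P\ket{\Psi}/\norm{P\ket{\Psi}}$ where $P = \Pi^+\Pi^-$, and to show directly that its overlap with $\ket{\Psi}$ is at least $1 - \delta$. The first observation is that $P$ is itself an orthogonal projector: since $I^+$ and $I^-$ are disjoint and the $x_j$ are orthonormal, the single-mode operators $Q_j^- = a(x_j)^\dagger a(x_j)$ (for $j \in I^-$) and $Q_j^+ = a(x_j)a(x_j)^\dagger$ (for $j \in I^+$) are mutually commuting orthogonal projectors, so their product $P$ is an orthogonal projector onto the intersection of their images. In particular $\ket{\tilde\Psi}$ lies in the image of $P$, hence in the image of both $\Pi^-$ and $\Pi^+$ as required, and because $P$ is a projector one computes $\abs{\braket{\Psi | \tilde\Psi}} = \braket{\Psi | P | \Psi}/\norm{P\ket{\Psi}} = \norm{P\ket{\Psi}}$. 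We may assume $\delta < 1$, since otherwise the statement is vacuous; this also guarantees $\norm{P\ket{\Psi}} > 0$, so the normalization is well defined.

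It then remains to lower bound $\norm{P\ket{\Psi}}$, or equivalently to upper bound $\norm{(I - P)\ket{\Psi}}$, since the decomposition $\ket{\Psi} = P\ket{\Psi} + (I-P)\ket{\Psi}$ and the triangle inequality give $\norm{P\ket{\Psi}} \geq 1 - \norm{(I-P)\ket{\Psi}}$. Writing the commuting factors of $P$ as $Q_1, \dots, Q_k$ in any fixed order, I would use the telescoping identity $I - \prod_{i} Q_i = \sum_{i} \left(\prod_{\ell < i} Q_\ell\right)(I - Q_i)$, which follows from collapsing the partial products $R_i = Q_1\cdots Q_i$ via $R_{i-1} - R_i = R_{i-1}(I - Q_i)$. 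Applying the triangle inequality and using that each partial product $\prod_{\ell < i} Q_\ell$ has operator norm at most $1$ (a product of projectors is a contraction), one obtains $\norm{(I-P)\ket{\Psi}} \leq \sum_i \norm{(I - Q_i)\ket{\Psi}}$.

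Finally, each term is evaluated from the diagonal of the 1-RDM in the basis $\{x_j\}$. For $j \in I^-$ the relevant factor is the number operator $Q_j^-$, so $\norm{(I - Q_j^-)\ket{\Psi}}^2 = \braket{\Psi | a(x_j)a(x_j)^\dagger | \Psi} = 1 - \covmat_{jj}$, while for $j \in I^+$ one has $\norm{(I - Q_j^+)\ket{\Psi}}^2 = \braket{\Psi | a(x_j)^\dagger a(x_j) | \Psi} = \covmat_{jj}$. Summing the square roots yields $\norm{(I-P)\ket{\Psi}} \leq \sum_{j \in I^-}\sqrt{1 - \covmat_{jj}} + \sum_{j \in I^+}\sqrt{\covmat_{jj}} = \delta$, and combining with the triangle-inequality bound gives $\abs{\braket{\Psi | \tilde\Psi}} = \norm{P\ket{\Psi}} \geq 1 - \delta$. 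I do not expect a genuine obstacle here; the only points requiring care are verifying the commutation of the single-mode projectors, so that $P$ is truly a projector and the telescoping decomposition is valid, and keeping the occupied/unoccupied bookkeeping straight, namely that $a(x_j)^\dagger a(x_j)$ projects onto the occupied subspace of mode $x_j$ whereas $a(x_j)a(x_j)^\dagger$ projects onto the unoccupied one.
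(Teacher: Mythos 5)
Your proof is correct and follows essentially the same route as the paper's: the same candidate state $\Pi^+\Pi^-\ket{\Psi}/\|\Pi^+\Pi^-\ket{\Psi}\|$, a telescoping/triangle-inequality decomposition of $\|(I-\Pi^+\Pi^-)\ket{\Psi}\|$ into single-mode contributions, and the identical evaluation of each term via the diagonal of the 1-RDM. Your write-up is if anything slightly more explicit (verifying that the product of commuting projectors is a projector and handling the $\delta\geq 1$ case), but there is no substantive difference in method.
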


\begin{proof}
  We can directly compute
  \begin{align*}
    \bra{\Psi}  \Pi^+ \Pi^- \ket{\Psi} & = 1 - \bra{\Psi} I - \Pi^+ \Pi^- \ket{\Psi}                          \\
                                       & \geq 1 - \| (I - \Pi) \ket{\Psi} \|                                  \\
                                       & \geq 1 - \| (I - \Pi^-) \ket{\Psi} \| - \|(I - \Pi^+)\ket{\Psi}\|\ .
  \end{align*}
  The first norm may be bounded by
  \begin{align*}
    \|I - \Pi^-\ket{\Psi} \| & \leq \sum_{j \in I^-} \| I - a(x_j)^\dagger a(x_j) \ket{\Psi} \| \\
                             & = \sum_{j\in I^-} \sqrt{1-\covmat_{jj}}\ ,
  \end{align*}
  and the second may be bounded by
  \begin{align*}
    \|I - \Pi^+\ket{\Psi} \| & \leq \sum_{j \in I^+} \| I - a(x_j) a(x_j)^\dagger \ket{\Psi} \| \\
                             & = \sum_{j \in I^+} \sqrt{\covmat_{jj}}\ .
  \end{align*}
  We now let
  \begin{align}\label{eq:projected state}
    \ket{\tilde\Psi} = \frac{\Pi^+ \Pi^- \ket{\Psi}}{\|\Pi^+ \Pi^- \ket{\Psi}\|}\ ,
  \end{align}
  which satisfies the theorem by the above calculation.
\end{proof}

We now restate \cref{thm:Slater det impurity}, with some additional details.

\begin{thm}
  Let $\gap > 0$ be the ground state energy gap of $H_{\free}$, and let $\eps > 0$. Then for
  \begin{align*}
    K = \bigO\mleft(\log(\gap^{-1}) \mleft(\log(\eps^{-1}) + \log \log(\gap^{-1}) \mright) \mright)
  \end{align*}
  there exists a Slater determinant $\ket{\Theta}$ on $N - K$ modes and an arbitrary state $\ket{\Phi}$ on $K$ modes such that the state
  \begin{align*}
    \ket{\tilde\Psi} = \ket{\Phi} \wedge \ket{\Theta}
  \end{align*}
  has ground state overlap $\abs{\braket{\tilde\Psi}{\Psi}} \geq 1 - \eps$.
  Moreover, we may choose the Slater determinant $\ket{\Theta}$ to be defined on a set of modes that commute with $H_{\imp}$.
\end{thm}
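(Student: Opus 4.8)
The plan is to combine the exponential eigenvalue decay of the 1-RDM $\covmat$ (with respect to the $b_j$ modes) recorded in \cref{eq:exp decay 1rdm} with the projection guarantee of \cref{lem:excitation projection}. First I would diagonalise $\covmat$, obtaining natural orbitals $y_1,\dots,y_N$ with occupation numbers $\sigma_1 \geq \dots \geq \sigma_N$. Because $\ket{\Psi}$ has a definite particle number and $b_j = \tilde a_j^\dagger$ for $j \leq n$ while $b_j = \tilde a_j$ for $j > n$, every block of $\covmat$ that mixes the ``hole'' modes $\{1,\dots,n\}$ with the ``particle'' modes $\{n+1,\dots,N\}$ is a number-nonconserving expectation value and hence vanishes; thus $\covmat$ is block diagonal and each $y_j$ is purely of particle or hole type. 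This structural fact is what makes the subsequent projection number-conserving.

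Next I would fix $K$ so that the tail $\sum_{j>K}\sqrt{\sigma_j}$ is at most $\eps$. Using \cref{eq:exp decay 1rdm}, $\sqrt{\sigma_j} \leq \sqrt{c}\,\exp(-j/(2D))$ with $D = 14M\log(2\gap^{-1})$, so the tail is a geometric series bounded by $4\sqrt{c}\,D\exp(-K/(2D))$; requiring this to be $\leq \eps$ gives $K = \bigO(D(\log(\eps^{-1}) + \log D)) = \bigO(\log(\gap^{-1})(\log(\eps^{-1}) + \log\log(\gap^{-1})))$, exactly the claimed bound. I then apply \cref{lem:excitation projection} in the natural-orbital basis with $I^+ = \{K+1,\dots,N\}$ and $I^- = \emptyset$; there is no need for $I^-$, since in the $b$ picture the free ground state is the vacuum and all occupation numbers decay towards $0$. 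The resulting state $\ket{\tilde\Psi} = \Pi^+\ket{\Psi}/\|\Pi^+\ket{\Psi}\|$ has its lowest $N-K$ natural orbitals unoccupied, so it factorises as $\ket{\Phi}\wedge\ket{\Theta}$ with $\ket{\Theta}$ the corresponding Slater determinant on $N-K$ modes, and its overlap with $\ket{\Psi}$ is at least $1-\eps$; the block structure guarantees $\ket{\tilde\Psi}$ still lies in the $n$-particle sector.

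For the ``moreover'' clause I need the Slater-determinant modes orthogonal to the impurity subspace $V_{\imp} = \mathrm{span}(a_1,\dots,a_M)$, since a mode $x \perp V_{\imp}$ yields operators $a(x), a(x)^\dagger$ commuting with $H_{\imp}$. The difficulty is that $V_{\imp}$ is expressed in the $a$ basis and generically straddles both the particle and hole blocks, so I cannot simply freeze natural orbitals while staying orthogonal to $V_{\imp}$ and preserving particle number. My fix is to compress $\covmat$ onto a \emph{block-respecting} subspace contained in $V_{\imp}^\perp$: writing $V^{(h)}, V^{(p)}$ for the projections of $V_{\imp}$ onto the hole and particle blocks, I set $W = (V^{(h)})^\perp \oplus (V^{(p)})^\perp$ with complements taken within each block, which has codimension at most $2M$ and satisfies $W \subseteq V_{\imp}^\perp$. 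Since both $W$ and $\covmat$ respect the block decomposition, $\covmat_W = P_W \covmat P_W$ is block diagonal, so its eigenvectors $w_1, w_2, \dots$ are again pure-type and lie in $V_{\imp}^\perp$, while Cauchy interlacing gives $\lambda_j(\covmat_W) \leq \sigma_j$ and so preserves the exponential decay. Diagonalising $\covmat_W$, taking the active space to be $\mathrm{span}(w_1,\dots,w_{K'}) \oplus W^\perp$, and forcing the remaining modes $w_{K'+1}, w_{K'+2}, \dots$ (all in $V_{\imp}^\perp$) unoccupied via \cref{lem:excitation projection} reproduces the overlap bound $1-\eps$ with the same $K' = K$, at the cost of an additive constant $2M$ in the active-space size.

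The step I expect to be the main obstacle is precisely this last one: arranging simultaneously that the frozen modes are orthogonal to the impurity, that they remain pure particle or hole type so the projection conserves particle number, and that their occupation numbers still decay exponentially so that the error sum stays $\eps$-small without picking up a spurious factor of $N$. The combination of the block-respecting compression with Cauchy interlacing is what resolves all three requirements at once; by comparison, the geometric-series estimate for $K$ and the verification that $\ket{\Phi}\wedge\ket{\Theta}$ has the claimed product form are routine.
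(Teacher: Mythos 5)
Your proposal is correct and follows essentially the same route as the paper: exponential decay of the 1-RDM spectrum, a Cauchy-interlacing compression onto a block-respecting subspace orthogonal to the impurity (your $W$ is exactly the paper's $\mathcal L_-\oplus\mathcal L_+$ with $\mathcal L_\pm=\mathcal J_\pm\cap\mathcal M^\perp$), and the excitation-projection lemma with a geometric tail bound. The only cosmetic difference is that you work in the particle--hole ($b$) picture with a single set $I^+$, whereas the paper works with the original $a$ operators and splits the frozen modes into a filled set $I^-$ and an unfilled set $I^+$; these are equivalent.
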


\begin{proof}
  The full single-particle space is $\CC^N$.
  Let $\mathcal J_+$ and $\mathcal J_-$ be the subspaces of non-negative and negative energy modes (of dimensions $N - n$ and $n$ respectively) and let $\mathcal M$ be the subspace of modes on which $H_{\imp}$ acts (of dimension $M$), and let $\mathcal L$ be the orthogonal complement of $\mathcal M$. By definition, $H_{\imp}$ commutes with $a(x)$ for any $x \in \mathcal L$.
  We then define
  \begin{align*}
    \mathcal L_+ = \mathcal J_+ \cap \mathcal L, \qquad \mathcal L_- = \mathcal J_- \cap \mathcal L.
  \end{align*}
  By dimension counting,
  \begin{align*}
    \dim(\mathcal L_+) \geq N - M - n \qquad \dim(\mathcal L_-) \geq n - M.
  \end{align*}
  We define fermionic operators $b_j$ by $b_j = a_j^\dagger$ for $j \leq n$ and $b_j = a_j$ for $j > n$, so the free Hamiltonian is given by
  \begin{align*}
    H_{\free} = \sum_{j=1}^N \abs{\epsilon_j} b_j^\dagger b_j
  \end{align*}
  after subtracting a constant term.
  Since $H_{\free}$ is assumed to be normalized, the single particle energies $|\epsilon_j|$ now lie in $[0,1]$.
  We let $\covmat$ be the 1-RDM for the ground state $\ket{\Psi}$ with respect to the fermionic operators $b_j$, with eigenvalues $\sigma_1 \geq \sigma_2 \geq \dots$.
  Let $\Lambda^{\pm}$ be the projection onto $\mathcal L_{\pm}$ and let $\lambda_1^{\pm} \geq \lambda_2^{\pm} \geq \dots$ be the eigenvalues of $\Lambda^{\pm} \covmat  \Lambda^{\pm}$.
  By the Cauchy interlacing theorem and \cref{eq:exp decay 1rdm} we have
  \begin{align*}
    0 \leq \lambda_j^{\pm} \leq \sigma_j \leq c\exp\mleft(- \frac{j}{14M\log(2\omega^{-1})}\mright).
  \end{align*}
  Let $\covmat'$ be the 1-RDM with respect to the original fermionic operators $a_j$, so that
  \begin{align*}
    \covmat'_{jk} & = \delta_{jk} - \covmat_{kj} & \text{for } j,k \leq n, \\
    \covmat'_{jk} & = \covmat_{jk}               & \text{for } j,k > n.
  \end{align*}
  In particular, this means that $\Lambda^+ \covmat' \Lambda^+$ has eigenvalues $\lambda_j^+$ and $\Lambda^- \covmat' \Lambda^-$ has eigenvalues $1 - \lambda_j^-$.

  Let $\{v_j^+\} \subseteq \mathcal{L}^+$ and $\{v_j^-\} \subseteq \mathcal{L}^-$ be eigenbases of $\Lambda^+ \covmat  \Lambda^+$ and $\Lambda^- \covmat  \Lambda^-$ respectively, corresponding to eigenvalues of descending size. The union of these bases can be extended to an orthonormal basis $\mathcal{B}$ for $\CC^N$. Let $I^-$ index all but the first $K/2-M$ elements of $\{x_j^-\}$, and let $I^+$ index all but the first $K/2-M$ elements of $\{x_j^+\}$. Note that either of these sets may be empty, but $|I^- \cup I^+ | \geq N - K$.

  Applying \cref{lem:excitation projection} with $I^+$ and $I^-$ as above, we obtain a state $\ket{\tilde\Psi}$ as in \cref{eq:projected state} which has ground state overlap bounded by

  \begin{align*}
    |\braket{\Psi}{\tilde\Psi} | \geq 1 - \sum_{j \geq K/2 - M} \Big(\sqrt{\lambda_j^{+}} + \sqrt{\lambda_j^-}\Big)\ .
  \end{align*}

  By construction, the state $\ket{\tilde\Psi}$ is of the form
  \begin{align*}
    \ket{\tilde\Psi} = \ket{\Phi} \wedge \ket{\Theta}\ ,
  \end{align*}
  where $\ket{\Theta}$ is the state with modes $v_j^{+}$ not filled and modes $v_j^{-}$ filled for $j \geq K/2 - M$, and where $\ket{\Phi}$ is a state on at most $K$ modes. By choosing $K$ as in the theorem statement, the ground state overlap can be lower bounded by $1-\delta$, and hence $\ket{\tilde\Psi}$ fulfils all the desired requirements.
\end{proof}

In the case where $H_{\free}$ has a constant spectral gap $\omega$, the exponential decay of the 1-RDM spectrum places a constant upper bound on the number of excitations for a good approximation to the ground state $\ket{\Psi}$. This leads to a space of polynomial dimension which may be readily searched via quantum phase estimation. If the matrix $\covmat$ itself is known --- but with no assumptions on the spectral gap $\omega$ --- a similar argument can be applied. Although one cannot place a constant bound on the total number of excitations as before, knowing the eigenvectors of $\covmat$ allows one to predict \emph{which} modes are likely to be excited. Following the same approach as Corollary 2 in \cite{bravyi2017complexity}, this again leads to a search space of polynomial dimension. Based on the above discussion, we now restate and prove \cref{thm:complexity impurity informal}.

\begin{thm}\label{thm:impurity complexity}
  Consider a quantum impurity problem with $M = \bigO(1)$.
  Suppose that either $\gap = \bigO(1)$, or we are given the covariance matrix $\covmat$ for a ground state satisfying \cref{eq:exp decay 1rdm}.
  Then the quantum impurity problem can be solved by a quantum computer using $\poly(N, \precision^{-1})$ gates.
\end{thm}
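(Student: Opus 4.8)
The plan is to reduce the problem, in both regimes, to quantum phase estimation (QPE) with an efficiently preparable \emph{mixed} guiding state $\rho$ whose overlap $\eta := \tr(\rho\,\ket{\Psi}\bra{\Psi})$ with a ground state $\ket{\Psi}$ of $H$ is at least $1/\poly(N,\precision^{-1})$. Given such a $\rho$, one samples a pure component, runs QPE to precision $\precision$ using $\tilde{\bigO}(\precision^{-1})$ evolution time (Hamiltonian simulation of $H_\free + H_\imp$ costs $\poly(N)$ per unit time), repeats $\bigO(\eta^{-1}\log\delta^{-1})$ times, and keeps the smallest measured energy; the total cost is then $\poly(N,\precision^{-1})$. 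As a preliminary step I would truncate the low-energy modes of $H_\free$ (Lemma~5 of \cite{bravyi2017complexity}) so that without loss of generality $\gap = \Omega(\precision) \geq 1/\poly(N)$, which makes the decay constant $D := 14M\log(2\gap^{-1})$ in \cref{eq:exp decay 1rdm} at most $\bigO(\log N)$. Each regime then amounts to exhibiting a subspace $\mathcal{S}$ of polynomial dimension that carries a constant fraction of $\ket{\Psi}$, i.e. $\norm{P_{\mathcal S}\ket{\Psi}}^2 = \Omega(1)$, and whose orthonormal basis consists of Slater determinants preparable in $\bigO(N^2)$ gates; the maximally mixed state on $\mathcal{S}$ then has $\eta \geq \Omega(1)/\dim(\mathcal{S}) \geq 1/\poly(N)$.

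For the constant-gap case I would work in the (known) eigenbasis $\{b_j\}$ of $H_\free$ and take $\mathcal{S}$ to be the span of all states obtained from the free ground state $\ket{\Theta}$ by at most $k$ particle--hole excitations, for a suitable constant $k$. Since the expected number of excitations is $\tr(\covmat) = \sum_j \sigma_j = \bigO(M\log\gap^{-1}) = \bigO(1)$ by \cref{eq:exp decay 1rdm}, Markov's inequality gives $\norm{P_{\mathcal S}\ket{\Psi}}^2 \geq 1 - \bigO(1)/k = \Omega(1)$, while $\dim(\mathcal{S}) = \bigO(N^{2k}) = \poly(N)$. Note that $\mathcal{S}$ is fixed by $H_\free$ alone, so here the algorithm requires no knowledge of $\covmat$: one simply samples a random low-excitation Slater determinant and feeds it to QPE.

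In the case where $\covmat$ is given but the gap is uncontrolled, $\tr(\covmat)$ may grow like $\log N$, so the $\le k$-excitation space becomes quasi-polynomial and I cannot proceed as above. Instead I would diagonalise $\covmat$ and, following Corollary~2 of \cite{bravyi2017complexity}, use the eigenvalue decay \cref{eq:exp decay 1rdm} to split the natural orbitals into a frozen core (eigenvalue near $1$), a virtual set (eigenvalue near $0$), and an active set of only $\bigO(\log\gap^{-1}) = \bigO(\log N)$ orbitals. Fixing the core and virtual occupations via \cref{lem:excitation projection} and taking $\mathcal{S}$ to be the full Fock space of the active orbitals gives $\dim(\mathcal{S}) = 2^{\bigO(\log N)} = \poly(N)$ with constant overlap, and $\covmat$ supplies precisely the single-particle rotation needed to prepare the active Slater determinants.

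The hard part will be the dimension counting in this second case. A naive hard threshold on the eigenvalues of $\covmat$ forces an active set of size $\bigO(\log\gap^{-1}\log\log\gap^{-1})$, whose Fock space is only \emph{quasi}-polynomial and merely reproduces the classical scaling of \cite{bravyi2017complexity}; collapsing this down to $\bigO(\log\gap^{-1})$ genuinely active orbitals requires exploiting the whole exponential eigenvalue profile through a soft, entropy-weighted selection rather than a sharp cutoff (the total marginal entropy $\sum_j H(\sigma_j)$ is only $\bigO(D) = \bigO(\log N)$, even though the number of fractionally occupied orbitals is larger). Verifying both that this refined active space still captures a constant fraction of $\ket{\Psi}$ and that the associated mixed state remains efficiently preparable is the main technical obstacle, and is exactly where I would lean on the analysis of Corollary~2 of \cite{bravyi2017complexity}.
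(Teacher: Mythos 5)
Your first case (constant gap) is correct and in fact slightly more elementary than the paper's: the paper invokes \cref{thm:Slater det impurity} with $\eps=\nicefrac12$ to get a constant-overlap state supported on at most $K$ excitations, whereas you get the same conclusion directly by bounding the expected excitation number $\Tr[\covmat]=\sum_j\sigma_j=\bigO(M\log\gap^{-1})=\bigO(1)$ via \cref{eq:exp decay 1rdm} and applying Markov's inequality. Both routes land on the same polynomial-dimensional low-excitation subspace and the same maximally mixed guiding state, so this half is fine (your $\gap=\Omega(\precision)\geq 1/\poly(N)$ aside conflates $\precision^{-1}$ with $\poly(N)$, but since the target complexity is $\poly(N,\precision^{-1})$ this is cosmetic).

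The second case has a genuine gap, and you have correctly located it but not closed it. You observe that a hard eigenvalue threshold forces an active set of size $\bigO(\log\gap^{-1}\log\log\gap^{-1})$, whose Fock space is only quasi-polynomial in $\precision^{-1}$, and you propose to fix this by ``collapsing'' to $\bigO(\log\gap^{-1})$ genuinely active orbitals via an entropy-weighted selection. That is not how the resolution works, and as stated it would fail: no subset of only $\bigO(\log\gap^{-1})$ orbitals can be frozen-or-virtual-completed to capture constant weight, precisely because of the count you yourself computed. The paper's device is different: keep \emph{all} fractionally occupied natural orbitals, partition them into consecutive blocks $I_s$ of size $Q=\lceil 14M\log(2\gap^{-1})\rceil$ ordered by decreasing eigenvalue, and restrict not the orbitals but the \emph{excitation pattern} --- at most $R_s = c_0 M\log(2\gap^{-1})e^{-s/2}$ excitations in block $s$. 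Markov's inequality per block plus a union bound (using that $\bra{\Psi}N_s\ket{\Psi}\leq c_0M\log(2\gap^{-1})e^{-s}$ decays faster than the caps $R_s$) shows the resulting projector $P$ satisfies $\bra{\Psi}P\ket{\Psi}\geq\nicefrac12$, and the combinatorial count of Corollary 2 of \cite{bravyi2017complexity} gives $\Tr[P]=e^{\bigO(M\log\gap^{-1})}=\poly(\precision^{-1})$ because the per-block budgets decay geometrically. So the polynomial dimension comes from geometrically decaying per-block excitation budgets on a larger orbital set, not from shrinking the active set; your proposal defers exactly this construction to the reference, which is the step that constitutes the proof.
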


It is not necessary that $\covmat$ is given precisely; for the proof below it is sufficient merely to have knowledge of an upper bound $\tilde{\covmat} \geq \covmat$ such that the spectrum of $\tilde{\covmat}$ decays exponentially as in \cref{eq:exp decay 1rdm}.

\begin{proof}
  We may choose fermionic operators such that
  \begin{align*}
    H_{\free} = \sum_j \epsilon_j b_j^\dagger b_j
  \end{align*}
  with $0 \leq \epsilon_j \leq 1$.
  We represent the Hamiltonian on $N$ qubits using the Jordan-Wigner transformation.
  If $\gap = \bigO(1)$, by \cref{thm:Slater det impurity}, for constant $\eps = \frac12$, for \emph{constant} $K$ there exists a ground state $\ket{\Psi}$ that has overlap $\braket{\Psi}{\tilde\Psi} \geq \frac12$ for a state $\ket{\tilde\Psi}$ that is a superposition of Slater determinants with at most $K$ excitations.
  We now consider the subspace $V$ of the full Hilbert space, which is spanned by all states with at most $K$ excitations.
  The dimension of this space is
  \begin{align}\label{eq:dimension bound}
    \dim(V) = \sum_{k=0}^K {N \choose k}.
  \end{align}
  For $\nicefrac{N}{2} \geq K$
  \begin{align*}
    \sum_{k=0}^K {N \choose k} \leq K {N \choose K} \leq K N^K = \poly(N).
  \end{align*}
  For small $\nicefrac{N}{2} < K$, \cref{eq:dimension bound} is bounded by $2^N \leq 2^{2K}$, which is constant.
  If we let $\tau$ denote the maximally mixed state on $V$, then
  \begin{align*}
    \bra{\Psi} \tau \ket{\Psi} \geq \frac{\abs{\braket{\Psi}{\tilde\Psi}}^2}{\dim(V)},
  \end{align*}
  which has at least inverse polynomial magnitude.
  The mixed state $\tau$ can be efficiently prepared on a quantum computer.
  In the qubit representation it corresponds to the uniform mixture of standard basis states $\ket{x_1\dots x_N}$, $x_1\dots x_N \in \{0,1\}^N$ with Hamming weight $x_1 + \dots + x_N \leq K$.
  One can prepare this state by uniformly sampling from bit strings $x_1 \dots x_n$ with Hamming weight at most $K$, and then prepare $\ket{x_1 \dots x_n}$.
  Applying quantum phase estimation, using (approximate) time evolution along $H$ and initial state $\tau$ now gives an efficient algorithm for computing the ground state energy to precision $\precision$ using $\poly(N, \precision^{-1})$ gates.

  The case where we are given $\covmat$ (but no assumption on $\gap$) proceeds similarly; it suffices to find an efficiently preparable state with at least inverse polynomial overlap.

  First we note that, as discussed in \cref{sec:setup}, we may without loss of generality restrict ourselves to the case where $\omega = \Omega(\precision)$. Let $x_1,x_2,\dots \in \CC^n$ be the orthonormal eigenvectors of the 1-RDM $\covmat$ corresponding to eigenvalues $\sigma_1\geq \sigma_2\geq\dots$ respectively. Now take $Q = \lceil 14M \log (2\omega^{-1})\rceil$, and define sets
  \begin{align*}
    I_1    & = \{1,2,\dots,Q\},   \\
    I_2    & = \{Q+1,\dots,2Q\},  \\
    I_3    & = \{2Q+1,\dots,3Q\}, \\
    \vdots &
  \end{align*}
  For each $s \in \NN$, define the partial number operator
  $$
    N_s = \sum_{j \in I_s} b(x_j)^\dagger b(x_j)\ ,
  $$
  which counts how many of the modes corresponding to the subset $I_s$ of eigenvectors of $\covmat$ are excited. Note that the $N_s$ mutually commute. We can upper bound the expected value of these observables by
  \begin{align*}
    \bra{\Psi} N_s \ket{\Psi} & = \sum_{j \in I_s} \sigma_j             \\
                              & \leq c_0 M \log(2\omega^{-1}) e^{-s}\ ,
  \end{align*}
  where $c_0$ is some universal constant, using \eqref{eq:exp decay 1rdm}. For each $s$, let $n_s$ be a random variable corresponding to the measurement distribution of the observable $N_s$ induced by the state $\ket{\Psi}$, and let $R_s = c_0 M\log(2\omega^{-1}) e^{-s/2}$. Then Markov's inequality implies
  $$
    \prob[n_s \geq R_s] \leq e^{-s/2}\ ,
  $$
  and applying a union bound we see that, for any $S \in \NN$,
  $$
    \prob[n_s \leq R_s \text{ for all } s \geq S] \geq 1 - \sum_{s\geq S} e^{-s/2}\ .
  $$
  Now choose $S_0 \in \NN$ such that
  $$
    \sum_{s \geq S_0} e^{-s/2} \leq \frac{1}{2}
  $$
  and
  $$
    \frac{R_s}{Q} < \frac{1}{2} \text{ for all } s \geq S_0\ .
  $$
  Note that such an $S_0$ can be chosen as a universal constant, independent of $M$, $\omega$, and $N$. Then
  $$
    \prob[n_s \leq R_s \text{ for all } s \geq S_0] \geq \frac{1}{2}\ .
  $$
  Equivalently, letting $P$ denote the projection onto the subspace spanned by Fock basis states $\ket{\Phi}$ (with respect to the modes $b(x_j)$) satisfying $N_s \ket{\Phi}\leq R_s$ for all $s > S_0$,
  $$
    \bra{\Psi} P \ket{\Psi} \geq \frac{1}{2}\ .
  $$
  In particular this implies that the maximally mixed state on $P$, $\tau^\prime$, has squared ground state overlap
  $$
    \bra{\Psi} \tau^\prime \ket{\Psi} \geq \frac{1}{2\Tr[P]}\ .
  $$
  By an identical combinatorial argument to the one presented in Corollary 2 of \cite{bravyi2017complexity}, $\Tr[P]$ can be bounded by $e^{O(m\log(\omega^{-1}))} = \poly(\precision^{-1})$, completing the proof.
\end{proof}

In fact even in the absence of a spectral gap or knowledge of the 1-RDM as in \cref{thm:impurity complexity}, one may still obtain a quasipolynomial quantum speedup for the impurity problem. In particular, whereas the classical algorithm of \cite{bravyi2017complexity} requires time $\poly(N)\exp\Bigl( \bigO(\log(\precision^{-1})^3)\Bigr)$, a naive quantum adaptation of this approach can reduce the time complexity to $\poly(N)\exp\Bigl( \bigO(\log(\precision^{-1})^2)\Bigr)$. This speedup arises from the step analogous to \cref{lem:excitation projection}, in which the search space for candidate ground states is reduced to a limited set of active excitations. In the classical case it is necessary to choose $\delta = \bigO(\precision)$ to ensure that $\ket{\tilde\Psi}$ approximates the ground state energy sufficiently well, however in the quantum case $\delta$ can be taken as constant for the phase estimation step. In the algorithm of \cite{bravyi2017complexity}, this leads to reduction of the search space dimension by a factor of $\log(\precision)$. Although this is not sufficient to show BQP containment of the quantum impurity problem, it provides intuition that a polynomial time algorithm for the fully general case may be attainable through more detailed analysis.

\section{Numerical methods}\label{sec:numerical_methods}
Geometries of all systems were optimized with RI-DFT \cite{eickhorn1995RIDFT,eickhorn1997RIDFT} and a Perdew, Burke, and Ernzerhof exchange-correlation functional (PBE) \cite{perdew1996pbe}, D3 dispersion correction \cite{grimme2010d3} and Becke-Johnson damping \cite{grimme2011BJ_correction}
using a def2-SVP basis set \cite{weigend2005def2svp}. Structures were optimized with TURBOMOLE \cite{Ahlrichs1989,Turbomole702}.

For the (oligo)tryptophan systems, HF molecular orbitals were obtained for the bootstrap embedding and for the sequence of tryptophan side chains. As we were interested in the valence shell for the exploration of multiconfigurational character, the molecular orbitals were first obtained using the def2-SVP basis set and then used to construct intrinsic bond orbitals \cite{Knizia2013, Senjean2021} (IBOs) separately for occupied and virtual space.
For the Huzinaga embedding \cite{hegely2016exact, Chulhai2018} in the tryptophan and ruthenium system, Kohn-Sham orbitals calculated with PBE were used instead. IBOs were again constructed separately for the occupied and virtual space. These calculations were run with Serenity \cite{Serenity2018, Niemeyer2022}. For the bootstrap embedding, Hartree-Fock molecular orbitals were obtained using the def2-SVP basis set. Following the Hartree-Fock calculation, intrinsic atomic orbitals \cite{Knizia2013} were constructed and used to represent the 1-RDM. From there, the procedure described in Ref.~\cite{ye2021bootstrap} was followed to obtain the fragment and the entangled bath orbitals. These calculations were run with the PySCF program \cite{sun2018pyscf,sun2020pyscf,sun2015libcint}.

Orbitals selected for the active space calculations in the case of the tryptophan sequence comprised the entire $\pi$ system of the tryptophan side chain, as well as the $\sigma$ and $\sigma^*$ orbitals of the C$-$H bond in the five-membered ring that forms a weak CH-$\pi$ hydrogen bond. For the embedding examples, single-orbital entropies, mutual information and orbital ordering were obtained with autoCAS \cite{stein2016automated,autoCAS230} and the active spaces were constructed by choosing $K$ orbitals corresponding to the largest values.

Next, DMRG calculations with bond dimension $D=1024$ were performed to obtain an approximation of the target ground state of the system as an MPS.
Slater determinants with the leading contributions were constructed from the MPS using sampling-reconstruction of the complete active space (SR-CAS) \cite{boguslawski2011srcas} and used to construct the sum-of-Slater states. MPS wavefunctions of lower bond dimensions were obtained by truncating the bond dimension using the singular value decomposition. DMRG calculations were run with the QCMaquis software package \cite{keller2015qcmaquis}.


\end{document}